\newtheorem{theorem}{Theorem}
\newtheorem{lemma}{Lemma}
\newtheorem{corollary}{Corollary}
\newcommand{\eps}{\epsilon}
\newcounter{numquest}
\title{\textbf{Approximating Vertex Cover\\ in Dense Hypergraphs}\\[1ex]}
\author{
  Jean Cardinal\thanks{Universit\'e Libre de Bruxelles (ULB), CP212. Email:~\texttt{jcardin@ulb.ac.be}}\\[.75ex]
  Richard Schmied\thanks{Dept. of Computer Science, University of Bonn.
    Work supported by Hausdorff Doctoral Fellowship.
    Email:~\texttt{schmied@cs.uni-bonn.de}}
  \and
  Marek Karpinski\thanks{Dept. of Computer Science and the Hausdorff
    Center for Mathematics, University of Bonn.
    Supported in part by DFG grants and the Hausdorff Center grant EXC59-1.
    Email:~\texttt{marek@cs.uni-bonn.de}}\\[.75ex]
  Claus Viehmann\thanks{Dept. of Computer Science, University of Bonn.
    Work partially supported by Hausdorff Center for Mathematics, Bonn.
    Email:~\texttt{viehmann@cs.uni-bonn.de}}
}
\date{}
\begin{document}
\maketitle
\begin{abstract}
We consider the minimum vertex cover problem in hypergraphs in which every hyperedge has size $k$ (also known as {\em minimum hitting set} problem, or {\em minimum set cover} with element frequency $k$). Simple algorithms exist that provide $k$-approximations, and this is believed to be the best possible approximation achievable in polynomial time. We show how to exploit density and regularity properties of the input hypergraph to break this barrier. In particular, we provide a randomized polynomial-time algorithm with approximation factor $k/(1 + (k-1)\frac{\bar{d}}{k\Delta})$, where $\bar{d}$ and $\Delta$ are the average and maximum degree, respectively, and $\Delta$ must be $\Omega (n^{k-1} / \log n)$. The proposed algorithm generalizes the recursive sampling technique of Imamura and Iwama~(SODA'05) for vertex cover in dense graphs. As a corollary, we obtain an approximation factor $k/(2-1/k)$ for subdense regular hypergraphs, which is shown to be the best possible under the unique games conjecture.\\

\noindent\textbf{Keywords:} Approximation Algorithms, Dense Hypergraphs, Vertex Cover, Inapproximability Bounds, Unique Games
\end{abstract}

\sloppy
\small

\section{Introduction}

A \emph{vertex cover} of a graph is a subset of its vertices hitting all edges. The \emph{Minimum Vertex Cover problem} asks to find a minimum size vertex cover, and is one of the most classical NP-complete problems. A \emph{$k$-uniform hypergraph} is a hypergraph whose edges are $k$-element subsets of his vertex set, and the \emph{$k$-uniform Vertex Cover problem} is the problem of finding a minimum size vertex cover in a $k$-uniform hypergraph. This problem is equivalent to the Set Cover problem where each element of the universe occurs in exactly $k$ sets. For $k=2$, it is the classical Vertex Cover problem in graphs. A simple $2$-approximation algorithm exists for this problem by constructing a maximal matching greedily. However, currently best known approximation algorithms can only achieve an approximation ratio of $2-o(1)$ \cite{H02,K09}. The problem is $k$-approximable in $k$-uniform hypergraphs, by choosing a maximal set of nonintersecting edges, and picking all vertices in them. The best known approximation factor is $k-(k-1)\ln \ln n /\ln n$ and is due to Halperin~\cite{H02}.\\

On the inapproximability side, one of the first hardness result for the $k$-uniform Vertex Cover problem is due to Trevisan~\cite{T01}. He obtained an inapproximability factor of $k^{1/19}$. Holmerin~\cite{H02a} proved that it is NP-hard to approximate within $k^{1-\eps}$, and in addition, in~\cite{H02b}, that the 4-uniform Vertex Cover problem is NP-hard to approximate within $2-\eps$. Dinur et al. proved a $(k-3-\eps)$ lower bound~\cite{DGK02}, later improved to $(k-1-\eps)$~\cite{DGKR05}.

In~\cite{K02}, Khot introduced the Unique Games Conjecture (UGC) as an approach to tackle the inapproximability of NP-hard optimization problems.
Assuming the unique games conjecture, an inapproximability factor of $k-\eps$ for the $k$-uniform Vertex Cover problem is due to Khot and Regev~\cite{KR08} which implies that the achieved ratios are the best possible. Recently, Bansal and Khot~\cite{BK10} were able to show in a conceptually different manner that this UGC-based inapproximability bound also holds on $k$-uniform hypergraphs that are almost $k$-partite.\\

In order to break these complexity barriers, we propose to consider dense instances of the Vertex Cover problem in hypergraphs, as was done previously for many other problems~\cite{AKK95,KRS09,KZ97,K01}. Typically, a graph is said to be dense whenever the number of edges is within a constant factor of $n^2$, where $n$ is the number of vertices. The Vertex Cover problem in dense hypergraphs, defined as hypergraphs with $\Omega ({n\choose k})$ hyperedges, has been considered before by Bar-Yehuda and Kehat~\cite{BK04}. They proposed an approximation algorithm which achieves a better approximation ratio than $k$ and showed that it is best possible under some assumptions similar to the UGC. To the authors' knowledge, this is the only result tackling the dense version of the Vertex Cover problem in hypergraphs. 

However, dense instances of the Vertex Cover problem in graphs ($k=2$) has been considered previously by Karpinski and Zelikovsky~\cite{KZ97}, Eremeev~\cite{E99}, Clementi and Trevisan~\cite{CT99}, and later by Imamura and Iwama~\cite{II05}. In the latter contribution, the authors give an approximation ratio parameterized by both the average and the maximum degree of the input graphs, that is strictly smaller than 2 whenever the ratio between the two is bounded, and the graph has average degree $\Omega (n / \log \log n)$. Other works tackling special instances of the Vertex Cover problem in hypergraphs include~\cite{K97,O05}.

\subsection{Definitions}

We use the notation $[i] := \{1,2,\ldots ,i\}$. Let $S$ be a finite set and $k\in [|S|]$, we introduce the abbreviation ${S \choose k}$ for the set of all subsets $S'\subseteq S$ such that $|S'|=k$. A {\em $k$-uniform hypergraph} is a pair $(V, E)$, where $V$ is the vertex set, and $E$ is a subset of ${V \choose k}$. We will usually set $n := |V|$ and $m := |E|$. In the remainder, unless stated explicitly, we will suppose that $k=O(1)$. 

A {\em Vertex Cover} of a $k$-uniform hypergraph $(V,E)$ is a set $C\subseteq V$ such that $e\cap C\not= \emptyset\ \forall e\in E$. The minimum Vertex Cover problem consists of finding a vertex cover of minimum size in a given hypergraph. 
 
The {\em degree} $d(v)$ of a vertex $v$ is equal to $|\{ e\in E : v\in e\} |$. Analogously, we define the degree $d(S)$ of a $S\subseteq V$ to be $|\{ e\in E : S\subseteq e\} |$. We refer to the average and maximum degree of a vertex as $\bar{d}$ and $\Delta$, respectively. A hypergraph is said to be {\em $d$-regular} whenever $d(v)=d\ \forall v\in V$, and {\em regular} whenever there exists a $d$ such that it is $d$-regular. 

We say that a $k$-uniform hypergraph is {\em $\ell$-wise $\eps$-dense} for $\ell + 1\in [k]$ and $\eps\in [0,1]$ whenever for every subset $S\in{V\choose \ell}$, we have $d(S)\geq \eps {n-\ell \choose k-\ell }$. Thus, for instance a 0-wise $\eps$-dense $k$-uniform hypergraph is a hypergraph with at least $\eps {n\choose k}$ hyperedges, while a 1-wise $\eps$-dense $k$-uniform hypergraph is such that every vertex is contained in at least $\eps {n-1 \choose k-1}$ hyperedges. This definition naturally generalizes the notion of {\em weak} and {\em strong density} in graphs defined in previous works~\cite{KZ97}. We call a $k$-uniform hypergraph {\em subdense} if 
$\bar{d} = \Omega \left( \frac{n^{k-1}}{\log n}\right)$ holds.

\subsection{Our Results}

We propose new approximation algorithms for the $k$-uniform Vertex Cover problem with approximation factors parameterized by the density and regularity parameters of the input hypergraph.\\ 

In 2004, Bar-Yehuda and Kehat~\cite{BK04} proved that the minimum vertex cover in 0-wise $\eps$-dense $k$-uniform hypergraphs was approximable within a factor 
$k/( k - (k-1)(1-\eps )^{\frac 1k} ) $. In the next section, we generalize this result to $\ell$-wise $\eps$-dense hypergraphs with $\ell>0$. We also provide a shorter proof of their result.\\

In Section~\ref{sec:rand}, we propose a randomized algorithm yielding an approximation factor $k/ (1 + (k-1)\frac{\bar{d}}{k\Delta })$ on $k$-uniform hypergraphs with $\bar{d} = \Omega (n^{k-1} / \log n)$. This implies a $k/(2-1/k)$-approximation algorithm for subdense regular hypergraphs. The proposed algorithm is based on Imamura and Iwama's recursive sampling technique~\cite{II05} for vertex cover in dense graphs. We generalize the technique to hypergraphs, and also improve its analysis.\\

Finally, in Section~\ref{sec:lb}, we prove the optimality of our bounds 
for a specified range of $\Delta$ under the Unique Games Conjecture~\cite{KR08}. Essentially, we show that any improvement on the approximation factor of our algorithms would yield an approximation factor asymptotically smaller than $k$ for the Vertex Cover problem in arbitrary $k$-uniform hypergraphs.

\section{Approximating Vertex Cover in Dense Hypergraphs}
\label{sec:approx}

We prove the following result:
\begin{theorem}
\label{thm:apxdense}
The Vertex Cover problem is approximable in polynomial time within a factor
$$
\frac k{k - (k-1)(1-\eps )^{\frac 1{k-\ell }}} -o(1)
$$
in $\ell$-wise $\eps$-dense $k$-uniform hypergraphs. 
\end{theorem}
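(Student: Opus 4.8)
The plan is to extract from $\ell$-wise density a lower bound on the size of \emph{every} vertex cover, and then play that bound against the standard cover obtained from a maximal family of pairwise disjoint hyperedges. The core of the argument is a density lemma: in an $\ell$-wise $\eps$-dense $k$-uniform hypergraph, every vertex cover $C$ satisfies $|C|\ge\bigl(1-(1-\eps)^{1/(k-\ell)}\bigr)n-o(n)$. To prove it, put $I:=V\setminus C$ and $t:=|I|$; since $C$ is a cover, $I$ is independent, so no hyperedge lies inside $I$. If $t<\ell$ the bound is trivial, so fix $S\in\binom{I}{\ell}$. By $\ell$-wise density $S$ lies in at least $\eps\binom{n-\ell}{k-\ell}$ hyperedges, each of which is $S$ together with a $(k-\ell)$-subset $T$ of $V\setminus S$; and $T\not\subseteq I\setminus S$, for otherwise $S\cup T\subseteq I$ would be a hyperedge inside $I$. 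There are only $\binom{t-\ell}{k-\ell}$ choices of $T$ inside $I\setminus S$, so $d(S)\le\binom{n-\ell}{k-\ell}-\binom{t-\ell}{k-\ell}$, and combining with the density bound gives $\binom{t-\ell}{k-\ell}\le(1-\eps)\binom{n-\ell}{k-\ell}$. Since $k=O(1)$, both binomials are polynomials of degree $k-\ell$ in $t$ resp.\ $n$, so this forces $t\le(1-\eps)^{1/(k-\ell)}n+o(n)$, i.e.\ $|C|\ge\alpha'n-o(n)$ with $\alpha':=1-(1-\eps)^{1/(k-\ell)}$. Specialized to $\ell=0$ this lemma is, I believe, the shorter proof of Bar-Yehuda and Kehat's result promised in the introduction.

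Next, compute a maximal family $e_1,\dots,e_p$ of pairwise disjoint hyperedges and set $D:=\bigcup_i e_i$; then $D$ is a vertex cover with $|D|=kp$, every vertex cover meets each $e_i$ so $\mathrm{OPT}\ge p$, and $kp\le n$. Writing $\beta':=1-\alpha'$, the target ratio is $\frac{k}{k-(k-1)\beta'}=\frac{k}{1+(k-1)\alpha'}=:r$; since $\mathrm{OPT}\ge\alpha'n-o(n)$, returning $D$ already meets the bound whenever $kp\le r\alpha'n$. The hard case is $kp>r\alpha'n$: then the maximal matching — hence also $\mathrm{OPT}$ — is a constant fraction of $n$, yet $|D|=kp$ may still be as large as $k\cdot\mathrm{OPT}$ while the trivial cover $V$ may have size up to $\frac{1}{\alpha'}\mathrm{OPT}$, and one checks $\min\bigl(k,\frac{1}{\alpha'}\bigr)>r$ there, so neither obvious cover works and a genuinely additional idea is needed.

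That intermediate regime is the main obstacle. The route I would take is recursive: as long as there is a vertex $v$ whose degree exceeds the threshold $\binom{n-1}{k-1}-\binom{\beta'n}{k-1}$ (up to lower-order terms) — beyond which the density lemma applied to an optimal cover forces $v$ into that cover — move $v$ into the solution, delete it, and recurse; when no such vertex survives, the hypergraph has bounded maximum degree and one balances the maximal-matching cover against the density lemma there. Because each peeled vertex lies in some optimal cover, peeling is ``free'' in the multiplicative ratio; the delicate point — and where most of the effort goes — is that each deletion degrades the density parameter by only $O(1/n)$, so the recursion must be arranged (peeling only while the degree surplus is large, or in controlled batches) so that the total degradation, together with the accumulated additive slacks, contributes merely the stated $-o(1)$ while the matching-versus-density balance still closes exactly at $\frac{k}{k-(k-1)\beta'}$. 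The remaining calculations are routine, and the parameter $\ell$ enters only through the exponent $k-\ell$ coming from the $\binom{t-\ell}{k-\ell}$ count in the density lemma; setting $\ell=0$ recovers the bound $k/\bigl(k-(k-1)(1-\eps)^{1/k}\bigr)$ of Bar-Yehuda and Kehat.
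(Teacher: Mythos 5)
Your density lemma (every cover has size at least $(1-(1-\eps)^{1/(k-\ell)})n-o(n)$, via counting the hyperedges that a set $S$ of $\ell$ independent vertices can possibly lie in) is correct and is indeed the substance of the ``shorter proof'' of the Bar-Yehuda--Kehat lower bound. But the argument does not close. You correctly identify that in the regime where the maximal matching is a constant fraction of $n$ neither the matching cover nor the trivial cover achieves the target ratio, and the peeling rule you propose to handle it rests on a false implication. If $v\notin C$ for an optimal cover $C$ and $t:=|V\setminus C|$, the independence argument gives only $d(v)\le\binom{n-1}{k-1}-\binom{t-1}{k-1}$; to conclude that a vertex of degree above $\binom{n-1}{k-1}-\binom{\beta'n}{k-1}$ must lie in $C$ you would need a \emph{lower} bound $t\gtrsim\beta'n$, whereas the density lemma only gives the upper bound $t\le\beta'n+o(n)$. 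When $t$ is small (the optimal cover is large), arbitrarily high-degree vertices can lie outside every optimal cover --- in a near-complete $k$-uniform hypergraph all vertices have near-maximal degree, yet $k-1$ of them are omitted from the optimum. So peeling is not ``free,'' and the subsequent balancing step in the bounded-degree residual hypergraph is not specified in a way that yields the claimed ratio (a maximum-degree bound does not upper-bound the size of a maximal matching).

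The paper avoids this trap by never claiming that high-degree vertices belong to the optimum. Instead it branches: let $H$ be the $(1-(1-\eps)^{1/k})n$ highest-degree vertices (all of degree at least $(1-(1-\eps)^{(k-1)/k})\binom{n-1}{k-1}$ by a counting lemma). Either $H$ is entirely contained in some minimum cover, or some $v\in H$ is in no minimum cover, in which case the minimum cover must contain a vertex cover of the link of $v$ --- a $(k-1)$-uniform hypergraph that is again $0$-wise dense with a recomputed parameter $\eps'$ --- and one recurses there. This produces $O(n^k)$ candidate sets $W_i$, each of size at least $(1-(1-\eps)^{1/k})(n-k+1)$, at least one of which is guaranteed to sit inside a minimum cover; the case $\ell>0$ reduces to $\ell=0$ by enumerating the $O(n^\ell)$ choices of an $\ell$-set outside the optimum and passing to its link. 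Each candidate is then completed by adding a $k$-approximate cover of the hyperedges it misses, and a short calculation (Lemma~\ref{lem:subset} with $\delta=1$) gives the stated factor. To repair your proof you would need to replace the peeling rule by this kind of enumeration (or some other device that produces a certified large subset of an optimal cover), since a degree threshold alone cannot certify membership in the optimum.
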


In order to attain theorem~\ref{thm:apxdense}, we provide several lemmas which are crucial to the proof of the main result in this section. We start with the following simple extension of a Lemma from Bar-Yehuda and Kehat~\cite{BK04}. 

\begin{lemma}
\label{lem:subset}
Let $G = (V,E)$ be a $k$-uniform hypergraph with a minimum vertex cover $C$, and let $W\subseteq V$ such that
$|W\cap C|\geq \delta |W|$ for some constant $\delta\in [0,1]$. For every constant $j\in \mathbb{N}$, given $W$, 
we can compute an approximate solution to the Vertex Cover problem in $G$, with an approximation factor
\begin{equation*}
\frac {k}{1+(\delta k-1)\frac{|W|}{n-j}} ,
\end{equation*}
in polynomial time.
\end{lemma}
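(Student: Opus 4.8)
The plan is to combine the information that $W$ contains many cover vertices with exhaustive enumeration over small subsets of $W$, followed by greedy cleanup. First I would observe that since $|W\cap C|\ge \delta|W|$, the set $W\cap C$ is a ``dense'' portion of the optimum restricted to $W$; the difficulty is that we do not know $C$, hence do not know $W\cap C$. To get around this, fix the constant $j$ and enumerate all subsets $T\subseteq W$ with $|T|=j$ (there are $\binom{|W|}{j}=O(n^j)$ of them, polynomial since $j$ and hence $k$ are constant). For the ``right'' guess $T\subseteq W\cap C$, I would argue that $T$ already hits every hyperedge entirely contained in $\ldots$ — more precisely, the key combinatorial step is: after removing the vertices of $T$ and all hyperedges they hit, every remaining hyperedge contains at most $k-1$ vertices, but more importantly one exploits that $W\cap C$ is large to bound how much of the cover lives outside $W$.

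The cleaner way I would run the argument: for each guessed $T$, delete $T$ from $G$, delete all hyperedges meeting $T$, and on the residual hypergraph $G_T$ run the trivial $k$-approximation (take a maximal collection of pairwise-disjoint hyperedges and output all their vertices), then return $T$ together with that cover. Among all guesses $T$, output the smallest cover found. For the analysis, consider the particular guess $T^\star$ consisting of $j$ vertices of $W\cap C$ chosen so that $T^\star$ hits the hyperedges of $G$ that are ``most concentrated'' on $W$ — concretely, one wants $T^\star$ to be a size-$j$ subset of $W\cap C$ maximizing the number of hyperedges hit, or chosen greedily; then $|C\setminus T^\star|\le |C|-j$ and the residual optimum $\mathrm{OPT}(G_{T^\star})\le |C|-j$. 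The $k$-approximation on $G_{T^\star}$ thus yields at most $k(|C|-j)$ vertices, so the total is at most $k(|C|-j)+j = k|C| - (k-1)j$. To finish, I need to choose $j$ as large as the hypothesis permits while keeping it constant; but the statement wants $j$ fixed and the bound $\tfrac{k}{1+(\delta k-1)|W|/(n-j)}$, which is $<k$ exactly when $\delta k>1$. Rewriting, the claimed ratio equals $\tfrac{k(n-j)}{(n-j)+(\delta k-1)|W|}$, so I must show the algorithm produces a cover of size at most $\tfrac{k(n-j)}{(n-j)+(\delta k-1)|W|}\,|C|$, equivalently save a $(\delta k-1)|W|/(n-j)$ fraction relative to $k|C|$.

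This forces a scaling/averaging refinement rather than a single guess of constant size. The right move is: partition $W$ (or sample from $W$) into roughly $(n-j)/|W|^{?}$ — no; instead, apply the guessing to a subset $T$ of size $j$ but argue via a counting bound that the maximal-disjoint-hyperedge family in $G_{T^\star}$ has size at most $(|C|-|T^\star\cap C|)\cdot(\text{something})$ and that, since a $\delta$ fraction of $W$ lies in $C$, a random $j$-subset of $W$ lies in $C$ with probability $\approx\delta^j$, letting us assume $T^\star\subseteq C$. Then the hyperedges in the maximal disjoint family $M$ of $G_{T^\star}$ are disjoint from $T^\star\subseteq W\cap C$, and each is covered by $C$, so $|M|\le |C\setminus T^\star|$; moreover each hyperedge of $M$ has $k$ vertices none in $W\cap C$, which caps $|\bigcup M|$. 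Balancing the contribution of $W$ — whose cover vertices we essentially get ``for free'' by also adding all of $W\cap C$ via a second level of enumeration — against the $k$-approximation on the rest gives the stated denominator $1+(\delta k-1)|W|/(n-j)$ after the routine algebra.

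The main obstacle I anticipate is getting the exact constant $(\delta k-1)|W|/(n-j)$ rather than a weaker saving: this requires the accounting to notice that each of the $|W|$ vertices of $W$ is either in $C$ (contributing to a $\delta|W|$-sized ``known'' part of the optimum that we include directly and hence pay $1$ for instead of $k$) or not in $C$ (and then it is simply absent from the optimum), and that the residual hypergraph on $V\setminus W$ still only needs $|C|-\delta|W|$ vertices optimally but is $k$-approximated — so the output is about $\delta|W| + k(|C|-\delta|W|) = k|C|-(\delta k-1)\delta|W|$... which is not quite the claimed bound either, signaling that the correct argument must instead compare to $k\cdot\mathrm{OPT}$ on a hypergraph where $W$'s contribution is amortized over all $n-j$ vertices, i.e.\ the $(n-j)$ in the denominator comes from the $j$ guessed-and-removed vertices plus an $n-j$-vertex residual on which the trivial bound loses the full factor $k$. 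Pinning down this amortization — equivalently, choosing the level-$j$ enumeration so that the guessed set plus the greedy cover has size exactly $k|C| - (k-1)\cdot(\text{stuff}) - (\text{stuff from }W)$ summing to the target — is where I expect to spend the real effort; everything after that is bounding $\binom{|W|}{j}$ and $\binom{|W\cap C|}{j}/\binom{|W|}{j}$ and verifying the fraction, which is routine.
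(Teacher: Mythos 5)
There is a genuine gap: you never identify the actual mechanism behind either the numerator saving or the denominator $n-j$, and your own accounting shows it. Enumerating constant-size subsets $T\subseteq W$ and guessing $T\subseteq W\cap C$ can only ever yield an additive saving of $(k-1)j=O(1)$ over the trivial $k|C|$ bound, as you compute yourself; no amortization of a constant-size guess produces the multiplicative term $(\delta k-1)\frac{|W|}{n-j}$. The paper's algorithm does no guessing inside $W$ at all: it outputs \emph{all of} $W$ together with a $k$-approximate cover of the subhypergraph $G'$ induced by the edges not met by $W$. Since $C\setminus W$ covers $G'$, its optimum $C'$ satisfies $|C'|\leq |C|-|W\cap C|\leq |C|-\delta|W|$, whence the solution $S$ satisfies $|S|\leq |W|+k(|C|-\delta|W|)=k|C|-(\delta k-1)|W|$. (Your variant of adding only $W\cap C$ is not implementable since $C$ is unknown, and your arithmetic there is also off: $\delta|W|+k(|C|-\delta|W|)=k|C|-(k-1)\delta|W|$, not $k|C|-(\delta k-1)\delta|W|$.)

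The role of $j$ is entirely different from what you assume: it parameterizes a preprocessing step that tests, for every $i\in\{k-1,\ldots,j\}$, all $\binom{n}{n-i}=O(n^j)$ subsets of size $n-i$ for being vertex covers. Either some set of size $n-j$ is a cover, in which case the returned solution has size $|S|=\rho|C|\leq n-j$; or none is, in which case $|C|>n-j$ and the always-feasible cover of size $n-k+1$ is already a $(1+o(1))$-approximation. In the first case one writes $|W|\geq \frac{|W|}{n-j}\rho|C|$, substitutes into $\rho|C|\leq k|C|-(\delta k-1)|W|$, and solves the resulting self-referential inequality $\rho\leq k-(\delta k-1)\frac{|W|}{n-j}\rho$ to get $\rho\leq \frac{k}{1+(\delta k-1)|W|/(n-j)}$. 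This normalization step is the piece your proposal is circling around in its last paragraph without finding; without it there is no way to convert the additive saving $(\delta k-1)|W|$ into the stated ratio.
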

\begin{proof}
For a fixed $j\geq k$, we can check for all $i\in \{k-1,..,j\}$ and for all $S\in {V \choose n-i}$ whether $S$
is a vertex cover of $G$ in polynomial time. Thus, our solution can be assumed to be smaller than $n-j$.
Then, the algorithm computes a $k$-approximation in the hypergraph $G'$ induced by the edges that are not covered by $W$, and returns the union of this with $W$. Let us denote by $S$ the returned solution, by $C$ the considered optimal solution, by $\rho$ the approximation factor $|S|/|C|$, and by $C'$ the optimal solution of the Vertex Cover problem on $G'$. Clearly, we can assume $\frac{\rho |C|}{n-j}\leq 1$. Hence, we obtain:
\begin{eqnarray}
|S| =\rho |C| & \leq & |W| + k |C'| \\
    		& = & |W| + k (|C| - \delta |W|) \\
    		& = & k |C| - (\delta k-1)|W| \\
	 	& \leq & k |C| - (\delta k-1) \frac{|W|}{n-j} \rho |C| \\
\rho		& \leq & k - (\delta k-1) \frac{|W|}{n-j} \rho \\
\rho & \leq &  \frac k{1+(\delta k-1)\frac{|W|}{n-j}} .
\end{eqnarray}
\end{proof}

The following lemma also plays a key role in our analysis.

\begin{lemma}
\label{lem:degcond}
In a 0-wise $\eps$-dense $k$-uniform hypergraph $G$, the first $(1-(1-\eps )^{\frac 1k})n$ highest-degree vertices all have degree at least $(1-(1-\eps )^{\frac {k-1}k}) {n-1\choose k-1}$.
\end{lemma}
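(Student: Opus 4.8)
The plan is to argue by contradiction via a counting (averaging) argument. Suppose the claim fails: then fewer than $(1-(1-\eps)^{1/k})n$ vertices have degree at least $(1-(1-\eps)^{(k-1)/k}){n-1\choose k-1}$, so there is a set $B\subseteq V$ with $|B| \ge (1-\eps)^{1/k} n$ consisting of vertices each of degree strictly less than $(1-(1-\eps)^{(k-1)/k}){n-1\choose k-1}$. Let $\beta := (1-\eps)^{1/k}$, so $|B|\ge \beta n$. The idea is to bound the number of hyperedges from below using density, and from above by splitting edges according to how many of their vertices lie in $B$.

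First I would count edges contained entirely in $B$: since $G$ is $0$-wise $\eps$-dense, $m \ge \eps{n\choose k}$, and the edges \emph{not} contained in $B$ number at most ${n\choose k} - {|B|\choose k}$. Using $|B|\ge\beta n$ and the approximation ${|B|\choose k}\big/{n\choose k} \to \beta^k = 1-\eps$ (up to lower-order terms, which is where the $o(1)$/asymptotic slack enters), the number of edges meeting $V\setminus B$ is at most $(1-(1-\eps))\,{n\choose k}\,(1+o(1)) = \eps{n\choose k}(1+o(1))$; hence at least $(1-o(1))$ fraction... — more carefully, the edges lying inside $B$ number at least ${|B|\choose k} \ge (1-\eps){n\choose k}(1-o(1))$, so a positive fraction of all ${n\choose k}$ potential edges lie inside $B$, forcing many incidences between $B$-vertices and $B$-internal edges.

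Then I would double-count incidences $(v,e)$ with $v\in B$, $e\subseteq B$: on one hand this equals $k \cdot |\{e\in E: e\subseteq B\}| \ge k(1-\eps){n\choose k}(1-o(1))$; on the other hand it equals $\sum_{v\in B} |\{e\in E: v\in e,\ e\subseteq B\}| \le \sum_{v\in B} d(v) < |B|\cdot (1-(1-\eps)^{(k-1)/k}){n-1\choose k-1}$. Now use the identity $k{n\choose k} = n{n-1\choose k-1}$ to rewrite the lower bound as $(1-\eps)\,n{n-1\choose k-1}(1-o(1))$, and the upper bound as $|B|(1-(1-\eps)^{(k-1)/k}){n-1\choose k-1}$. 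Canceling ${n-1\choose k-1}$ and dividing by $n$, we need $(1-\eps)(1-o(1)) < \frac{|B|}{n}(1-(1-\eps)^{(k-1)/k})$. Since $\frac{|B|}{n}\le 1$ we would get $(1-\eps) < 1-(1-\eps)^{(k-1)/k}$ in the limit, i.e. $(1-\eps)^{(k-1)/k} < \eps$; but for the sharp threshold we should keep $\frac{|B|}{n}\ge\beta=(1-\eps)^{1/k}$ working in our favor on the \emph{lower} side — so instead I would bound the incidence count using only the at-least-$\beta n$ vertices of $B$ and the at-least-$(1-\eps){n\choose k}$ internal edges and check that the inequality $(1-\eps) \le \beta\cdot(1-(1-\eps)^{(k-1)/k})$ is exactly violated, since $\beta(1-(1-\eps)^{(k-1)/k}) = (1-\eps)^{1/k} - (1-\eps) < 1-\eps$ would be false...

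The main obstacle, and the step to get right, is the bookkeeping of which direction each estimate must go so that the constants close up to exactly the stated threshold $(1-(1-\eps)^{(k-1)/k})$ rather than something weaker: one must feed the lower bound $|B|\ge (1-\eps)^{1/k}n$ into the incidence count on the side that strengthens the contradiction, and absorb the binomial-ratio discrepancies ${|B|\choose k}/{n\choose k} = (|B|/n)^k(1+O(1/n))$ into the $o(1)$ term in the theorem's conclusion. Once the averaging is set up with the right orientation, the contradiction is a one-line inequality $(1-\eps)^{1/k}\bigl(1-(1-\eps)^{(k-1)/k}\bigr) = (1-\eps)^{1/k} - (1-\eps) \ge 1-\eps$, which is false, completing the argument; the $o(1)$ slack is needed precisely to pass from the exact binomial counts to the clean $\beta^k = 1-\eps$ form.
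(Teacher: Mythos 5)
There is a genuine gap, and it sits at the very first quantitative step. You claim that the number of hyperedges of $G$ lying entirely inside $B$ is at least ${|B|\choose k}\ge (1-\eps){n\choose k}(1-o(1))$. But ${|B|\choose k}$ counts \emph{potential} $k$-subsets of $B$, not actual hyperedges of $G$; density gives no lower bound of this kind. The only lower bound available is $m - \bigl({n\choose k}-{|B|\choose k}\bigr)$, and with $m\ge\eps{n\choose k}$ and ${|B|\choose k}\approx(1-\eps){n\choose k}$ this evaluates to approximately $\eps{n\choose k}-\eps{n\choose k}=0$: every hyperedge could touch $V\setminus B$, so the incidence count over $B$-internal edges that your double-counting rests on can be vacuously zero. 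A second problem is the closing inequality: you want to refute $(1-\eps)^{1/k}-(1-\eps)\ge 1-\eps$, but this is equivalent to $\eps\ge 1-2^{-k/(k-1)}$ and so is \emph{true} for large $\eps$ and false for small $\eps$; it cannot serve as a uniform contradiction in either direction. (Also note the lemma is exact; no $o(1)$ slack is needed or available in its statement.)

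The argument that works is a global degree sum over \emph{all} vertices rather than an incidence count restricted to $B$. Let $H$ be the top $(1-(1-\eps)^{1/k})n$ vertices by degree. If the claim fails, every vertex outside $H$ has degree below the threshold $(1-(1-\eps)^{(k-1)/k}){n-1\choose k-1}$ (since $H$ holds the largest degrees), while each vertex of $H$ trivially has degree at most ${n-1\choose k-1}$. Then
\begin{equation*}
km \;=\; \sum_{v\in V} d(v) \;<\; |H|\binom{n-1}{k-1} + (n-|H|)\bigl(1-(1-\eps)^{\frac{k-1}{k}}\bigr)\binom{n-1}{k-1},
\end{equation*}
and the right-hand side simplifies, via $n{n-1\choose k-1}=k{n\choose k}$ and the identity $\bigl(1-(1-\eps)^{1/k}\bigr)+(1-\eps)^{1/k}\bigl(1-(1-\eps)^{(k-1)/k}\bigr)=\eps$, to exactly $k\eps{n\choose k}$, contradicting $m\ge\eps{n\choose k}$. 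This is why the two exponents $1/k$ and $(k-1)/k$ pair up exactly with no slack.
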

\begin{proof}
We consider a hypergraph $G$ containing $m\geq \eps {n\choose k}$ hyperedges.
Let us denote by $H$ the set of the first $(1-(1-\eps )^{\frac 1k})n$ highest-degree vertices (breaking ties arbitrarily).
Suppose the statement is not true. Then the number $m$ of edges in $G$ is strictly smaller than the number of edges in a hypergraph in which all vertices of $H$ have degree ${n-1\choose k-1}$ (the maximum possible), and all the remaining edges have degree exactly $(1-(1-\eps )^{\frac {k-1}k}) {n-1\choose k-1}$. Thus
\begin{eqnarray}
m & < & \frac 1k \left( |H| {n-1\choose k-1} + (n - |H|) (1-(1-\eps )^{\frac {k-1}k}) {n-1\choose k-1} \right) \\
    & = & \frac 1k \left( (1-(1-\eps )^{\frac 1k})n {n-1\choose k-1} + (n - (1-(1-\eps )^{\frac 1k})n) (1-(1-\eps )^{\frac {k-1}k}) {n-1\choose k-1} \right) \\
    & = & (1-(1-\eps )^{\frac 1k}) {n\choose k} + (1-\eps )^{\frac 1k} (1-(1-\eps )^{\frac {k-1}k}) {n\choose k}  \\
    & = & \eps {n\choose k} 
\end{eqnarray}
which is a contradiction, since $G$ is weakly $\eps$-dense.
\end{proof}

We now prove Theorem~\ref{thm:apxdense}, and first consider the case $\ell = 0$. In order to apply Lemma~\ref{lem:subset}, we need to find a large subset $W$ of a minimum vertex cover. The following recursive algorithm returns a polynomial-size collection $\cal W$ of subsets $W_i\subseteq V$, such that at least one of them is contained in a minimum vertex cover.\\

input: a 0-wise $\eps$-dense $k$-uniform hypergraph $G = (V,E)$
\begin{enumerate}
\item if $k=1$ then 
\begin{enumerate}
\item return a minimum vertex cover of $G$, of size $|E|\geq \eps n$
\end{enumerate}
\item else:
\begin{enumerate}
\item let $H$ be the set of the first $(1-(1-\eps )^{\frac 1k})n$ highest-degree vertices (breaking ties arbitrarily)
\item add $H$ to $\cal W$
\item for each $v\in H$:
\begin{enumerate}
\item let $G'$ be the $(k-1)$-uniform hypergraph $(V-\{ v \}, \{ e-\{v\} : e\in E, v\in e \} )$
\item let $\eps ' := 1 - (1-\eps )^{\frac k{k+1}}$
\item call the procedure recursively, with the parameters $G', \eps', k-1$; let ${\cal W}'$ be its output
\item add the sets of ${\cal W}'$ to $\cal W$
\end{enumerate}
\item return $\cal W$
\end{enumerate}
\end{enumerate}

\begin{lemma}
\label{lem:extract}
Given a 0-wise $\eps$-dense $k$-uniform hypergraph $G$, we can find in polynomial time a set ${\cal W} := \{ W_i\}_{i=1}^s$ of size $s = O(n^k)$, with $W_i\subseteq V$, and such that
\begin{enumerate}
\item There exists $i\in [s]$ such that $W_i$ is a subset of a minimum vertex cover of $G$,
\item $|W_i|\geq (1-(1-\eps )^{\frac 1k})(n-k+1),\ \forall i\in [s]$.
\end{enumerate} 
\end{lemma}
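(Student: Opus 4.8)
The plan is to analyze the recursive algorithm displayed just above the lemma statement and verify its two claimed properties by induction on $k$. The base case $k=1$ is immediate: a $1$-uniform hypergraph is just a collection of singletons, its unique minimum vertex cover is the set $E$ itself (size $\geq \eps n$ by $0$-wise $\eps$-density), and returning $\{E\}$ trivially satisfies both properties since $1-(1-\eps)^{1/1} = \eps$ and $|E| \geq \eps n \geq \eps(n-k+1)$.

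For the inductive step, first I would bound the size of $\cal W$: the recursion branches over $|H| \leq n$ vertices, decreasing $k$ by one each time, so the recursion tree has depth $k$ and at most $n^{k}$ leaves, and each node contributes one set $H$ to $\cal W$; since $k = O(1)$, $s = O(n^k)$ and the whole procedure runs in polynomial time. Next, for property~1, I would argue as follows. Fix a minimum vertex cover $C$ of $G$. Apply Lemma~\ref{lem:degcond}: the set $H$ of the top $(1-(1-\eps)^{1/k})n$ highest-degree vertices all have degree at least $(1-(1-\eps)^{(k-1)/k}){n-1 \choose k-1}$. Either $H \subseteq C$, in which case property~1 holds immediately (with $W_i = H$), or there exists some $v \in H \setminus C$. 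In the latter case, since $v \notin C$, every hyperedge $e$ containing $v$ must be hit by $C$ through another vertex, so $C - \{v\} = C$ restricted to $V - \{v\}$ is a vertex cover of the link hypergraph $G'$. The key point is that $G'$ is $0$-wise $\eps'$-dense for $\eps' = 1-(1-\eps)^{k/(k+1)}$: this needs $d(v) \geq \eps' {n-1 \choose k-1}$, i.e. $1-(1-\eps)^{(k-1)/k} \geq 1-(1-\eps)^{k/(k+1)}$, which holds because $(k-1)/k < k/(k+1)$ and $1-\eps \leq 1$. Wait --- I should double-check the exponent bookkeeping here, since the recursive call passes $k-1$ as the new uniformity and $\eps'$ as computed from the \emph{new} value; the claim to verify is that with uniformity $k$ at the current level, $\eps'$ defined from $k$ by $\eps' = 1-(1-\eps)^{k/(k+1)}$ matches what the child (with uniformity $k-1$) expects, namely that its degree-of-$v$ lower bound $1-(1-\eps)^{(k-1)/k}$ is at least its density parameter. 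This reconciliation of indices is where I expect to spend the most care.

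Given that $G'$ is $0$-wise $\eps'$-dense, the inductive hypothesis yields a collection ${\cal W}'$ for $G'$, one member $W'$ of which is contained in the minimum vertex cover $C - \{v\}$ of $G'$, hence contained in $C$; since ${\cal W}'$ is added to $\cal W$, property~1 is established. For property~2, I would track how the size lower bound degrades down the recursion: at the level with uniformity $k$ we add $H$ with $|H| = (1-(1-\eps)^{1/k})n$, but the sets coming from deeper recursive calls live on vertex sets of size $n-1, n-2, \dots$, and by the inductive hypothesis applied at uniformity $k-1$ on $n-1$ vertices with density $\eps'$, those sets have size at least $(1-(1-\eps')^{1/(k-1)})((n-1)-(k-1)+1) = (1-(1-\eps')^{1/(k-1)})(n-k+1)$. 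The exponent arithmetic $1-(1-\eps')^{1/(k-1)} = 1-((1-\eps)^{k/(k+1)})^{1/(k-1)}$ must be shown to equal (or exceed) $1-(1-\eps)^{1/k}$; this is the chained-exponent identity that makes the whole scheme work, and together with $n \geq n-k+1$ it gives the uniform bound $|W_i| \geq (1-(1-\eps)^{1/k})(n-k+1)$ for every $i$.

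The main obstacle is precisely this exponent tracking: one must choose the right inductive invariant so that the density parameter $\eps$ passed down through the $k$ levels of recursion, each time replaced by $1-(1-\eps)^{k/(k+1)}$-type updates relative to the current uniformity, composes to leave the final size guarantee at exactly $(1-(1-\eps)^{1/k})(n-k+1)$ rather than something weaker. I would set up the induction so that the statement proved for uniformity $k-1$ and density $\eps'$ plugs in cleanly; verifying the telescoping of the exponents $\tfrac{1}{k} = \tfrac{1}{k-1}\cdot\tfrac{k-1}{k}$ and $\tfrac{k}{k+1}$ across levels is routine once the bookkeeping is set up correctly, but getting that setup right is the crux. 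Everything else --- the polynomial size of $\cal W$, the polynomial running time, and the ``either $H \subseteq C$ or pick $v \in H \setminus C$'' dichotomy --- is straightforward.
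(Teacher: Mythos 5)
Your overall strategy is the paper's: analyze the displayed recursive algorithm, use Lemma~\ref{lem:degcond} to certify the density of the links, and push both properties through an induction on $k$. But the step you yourself flagged as the crux --- the exponent bookkeeping --- is carried out incorrectly, and as written it fails. You claim $G'$ is $0$-wise $\eps'$-dense for $\eps' = 1-(1-\eps)^{k/(k+1)}$ because $d(v)\geq (1-(1-\eps)^{(k-1)/k}){n-1\choose k-1}$ and ``$1-(1-\eps)^{(k-1)/k} \geq 1-(1-\eps)^{k/(k+1)}$, which holds because $(k-1)/k < k/(k+1)$.'' This is backwards: for $x=1-\eps\in[0,1]$ and exponents $a<b$ we have $x^{a}\geq x^{b}$, hence $1-x^{(k-1)/k}\leq 1-x^{k/(k+1)}$. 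So the degree bound from Lemma~\ref{lem:degcond} does \emph{not} certify the density $1-(1-\eps)^{k/(k+1)}$ at the current level; it only certifies the \emph{smaller} value $1-(1-\eps)^{(k-1)/k}$. The correct invariant, and the one the paper actually uses, is that when the current uniformity is $k$ the link $G'$ is $0$-wise $\eps'$-dense with $\eps' := 1-(1-\eps)^{(k-1)/k}$ (the pseudocode's exponent $k/(k+1)$ is written in the shifted convention of the paper's induction ``from $k$ to $k+1$,'' where it reads $1-(1-\eps)^{((k+1)-1)/(k+1)}$). With that choice the telescoping is exact rather than an inequality: $\bigl((1-\eps)^{(k-1)/k}\bigr)^{1/(k-1)} = (1-\eps)^{1/k}$ and $(n-1)-(k-1)+1 = n-k+1$, which is precisely the claimed bound. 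Your own property-2 computation, which keeps $\eps'=1-(1-\eps)^{k/(k+1)}$ and checks $k/((k+1)(k-1))\geq 1/k$, only ``works'' because it starts from a density guarantee that the algorithm does not actually have.

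A secondary point, shared with the paper's own write-up: after picking $v\in H\setminus C$, the set $C$ restricted to $V\setminus\{v\}$ is a vertex cover of $G'$ but not necessarily a \emph{minimum} one, so the inductive statement as literally phrased (``subset of a minimum vertex cover of $G'$'') does not apply to it. The clean fix is to strengthen the induction to: for \emph{every} vertex cover $C$ of $G$, some $W_i$ is contained in $C$; this goes through verbatim and specializes to a minimum cover of the original hypergraph. Since the paper has the same imprecision I would not count this against you, but it is worth repairing if you rewrite the argument.
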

\begin{proof}
The above algorithm outputs of set $\cal W$ of size $O(n^k)$ in time $O(n^k)$, which is polynomial since we assumed $k=O(1)$. 

The first condition is verified by induction. If all vertices in $H$ belong to a minimum vertex cover, we are done. Otherwise, there exists $v\in H$ that does not belong to any minimum vertex cover. But then a minimum vertex cover of $G$ must contain a minimum vertex cover of the $(k-1)$-uniform hypergraph $G'$, as otherwise some edges will not be covered. By induction, the recursive call returns one subset contained in a minimum vertex cover of $G'$, hence also in a minimum vertex cover of $G$. The base case $k=1$ is trivial.

We prove the second property by induction as well. Suppose that $|W_i|\geq (1-(1-\eps )^{\frac 1k})(n-k+1),\ \forall i\in [s]$ holds for $k$-uniform hypergraphs, for some fixed value of $k$. We now prove the property for $k+1$. From Lemma~\ref{lem:degcond}, the recursive calls are performed on weakly $\eps'$-dense hypergraphs with $n-1$ vertices. Thus by the induction hypothesis, the recursive call returns a collection of sets $W_i$ of size
\begin{eqnarray}
|W_i| & \geq &  (1-(1-\eps' )^{\frac 1k})((n-1)-k+1) \\
	  & = & (1-(1- (1 - (1-\eps )^{\frac k{k+1}}) )^{\frac 1k})(n -(k+1) -1) \\
	  & = & (1 - (1-\eps)^{\frac 1{k+1}} ) (n -(k+1) - 1),
\end{eqnarray} 
as claimed. The base case $k=1$ is verified, as in that case the procedure yields at least $\eps n$ vertices.
\end{proof}

We now tackle the case $\ell >0$, i.e. hypergraphs in which every subset of $\ell$ vertices is contained in $\eps {n-\ell \choose k-\ell }$ hyperedges.

\begin{lemma}
\label{lem:extractstrong}
Given a $\ell$-wise $\eps$-dense $k$-uniform hypergraph $G$, we can find in polynomial time a set ${\cal W} := \{ W_i\}_{i=1}^s$ of size $s=O(n^k)$, with $W_i\subseteq V$, and such that
\begin{enumerate}
\item There exists $i\in [s]$ such that $W_i$ is a subset of a minimum vertex cover of $G$,
\item $|W_i|\geq (1-(1-\eps )^{\frac 1{k-\ell }})(n-k+1),\ \forall i\in [s]$.
\end{enumerate} 
\end{lemma}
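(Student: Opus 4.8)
The plan is to reduce the $\ell$-wise dense case to the $0$-wise case of Lemma~\ref{lem:extract} by passing to the \emph{link} of an $\ell$-subset. The structural fact I would use is: if $G$ is $\ell$-wise $\eps$-dense and $k$-uniform on $n$ vertices, then for every $S\in\binom{V}{\ell}$ the link
\[
G_S:=\bigl(V\setminus S,\ \{\,e\setminus S:e\in E,\ S\subseteq e\,\}\bigr)
\]
is a $(k-\ell)$-uniform hypergraph on $n-\ell$ vertices with exactly $d(S)\ge\eps\binom{n-\ell}{k-\ell}=\eps\binom{|V(G_S)|}{k-\ell}$ edges, hence $0$-wise $\eps$-dense. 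The algorithm computes, for each of the $\binom{n}{\ell}$ subsets $S$, the collection $\mathcal W_S$ returned by the procedure of Lemma~\ref{lem:extract} on $G_S$, and outputs $\mathcal W:=\bigcup_S\mathcal W_S$.

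Granting the conclusions of Lemma~\ref{lem:extract} for each $G_S$ (now on $n-\ell$ vertices at uniformity $k-\ell$), the easy properties follow at once: $|\mathcal W|\le\binom{n}{\ell}\cdot O\bigl((n-\ell)^{k-\ell}\bigr)=O(n^k)$, computed in polynomial time since $k=O(1)$; and every $W\in\mathcal W_S$ has size at least $(1-(1-\eps)^{1/(k-\ell)})\bigl((n-\ell)-(k-\ell)+1\bigr)=(1-(1-\eps)^{1/(k-\ell)})(n-k+1)$, which is property~2. It is exactly the drop in uniformity from $k$ to $k-\ell$ that turns the exponent $1/k$ of the $0$-wise case into $1/(k-\ell)$; note the link operation does \emph{not} degrade $\eps$, in contrast to the recursive step of the $0$-wise algorithm, where $\eps$ is replaced by $\eps'=1-(1-\eps)^{k/(k+1)}<\eps$.

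For property~1 I would appeal to the slightly stronger statement that the proof of Lemma~\ref{lem:extract} actually establishes: its procedure, run on a $0$-wise $\eps$-dense $k$-uniform hypergraph, outputs a set contained in $C^{*}$ for \emph{every} prescribed vertex cover $C^{*}$, not just for a minimum one. Indeed, fixing $C^{*}$ throughout: either $H\subseteq C^{*}$ and we return $H$, or there is $v\in H\setminus C^{*}$, in which case $C^{*}$ (restricted to $V\setminus\{v\}$, which equals $C^{*}$ since $v\notin C^{*}$) is still a vertex cover of the link $G'$, on which we recurse with the same $C^{*}$; the base case $k=1$ returns the unique minimal vertex cover, contained in every vertex cover. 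Now fix a minimum vertex cover $C$ of $G$. Since $G$ is $k$-uniform, every $(k-1)$-subset of $V$ is an independent set, so $|C|\le n-(k-1)\le n-\ell$ and we may pick an $\ell$-subset $S\subseteq V\setminus C$ (bounded $n$ being trivial). As $S\cap C=\emptyset$ and $C$ covers every edge of $G$, $C$ hits $e\setminus S$ for every $e\supseteq S$, so $C$ is a vertex cover of $G_S$; applying the strengthened Lemma~\ref{lem:extract} to $G_S$ with $C^{*}:=C$ produces a set $W\in\mathcal W_S\subseteq\mathcal W$ with $W\subseteq C$, i.e.\ contained in a minimum vertex cover of $G$.

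The one step deserving care is the verification that the proof of Lemma~\ref{lem:extract} really does give this ``arbitrary vertex cover'' strengthening --- in particular that threading a fixed $C^{*}$ through the recursion leaves the density bookkeeping ($\eps\mapsto\eps'$ at each level, via Lemma~\ref{lem:degcond}) untouched, so that the sets $H$ stay large enough for the size bound regardless of $C^{*}$. Everything else is a direct application of Lemma~\ref{lem:extract} together with elementary counting.
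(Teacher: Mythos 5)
Your proof follows essentially the same route as the paper: pass to the link $G_S$ of an $\ell$-subset $S$ disjoint from a minimum vertex cover, observe that it is a $0$-wise $\eps$-dense $(k-\ell)$-uniform hypergraph on $n-\ell$ vertices, apply Lemma~\ref{lem:extract}, and enumerate all $O(n^{\ell})$ choices of $S$. Your strengthening of Lemma~\ref{lem:extract} to an arbitrary prescribed vertex cover $C^{*}$ correctly tightens a point the paper glosses over (a \emph{minimum} vertex cover of $G_S$ need not lie inside the fixed minimum cover $C$ of $G$), but this is a refinement of the same argument rather than a different approach.
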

\begin{proof}
Let $V$ be the vertex set of $G$. Consider a subset $S$ of $\ell$ vertices that do not belong to a given minimum vertex cover $C$ of $G$. We define the hypergraph $G'$ as the subhypergraph of $G$ whose vertex set is $V':=V\setminus S$, and whose hyperedges are the hyperedges of $G$ containing $S$, restricted to $V\setminus S$. Since $S$ is not contained in $C$, $C$ must contain a vertex cover of $G'$. From the definition of $\ell$-wise $\eps$-density, $G'$ has at least $\eps {n-\ell\choose k-\ell}$ edges. Hence $G'$ is a 0-wise $\eps$-dense $(k-\ell$)-uniform hypergraph with $n-\ell$ vertices. From Lemma~\ref{lem:extract}, we can extract $O(n^{k-\ell})$ candidates $W_i$, which are subsets of $V'$ of size at least $(1 - (1-\eps)^{\frac 1{k-\ell}} ) (n-k+1)$. One of them at least is contained in a minimum vertex cover of $G'$, and hence of $G$. By enumerating all $O(n^{\ell})$ possibilities for $S$, we get the result in time $O(n^k)$.
\end{proof}

The proof of Theorem~\ref{thm:apxdense} is now straightforward. By testing all possible sets $W_i\in{\cal W}$ and choosing the one that yields the smallest cover,  
we get from Lemma~\ref{lem:subset} (with $\delta = 1$) a polynomial-time approximation algorithm with approximation factor
\begin{equation*}
\frac {k}{1+(k-1)\frac{|W|}{n-j}} = \frac k{1+(k-1) (1 - (1-\eps)^{\frac 1{k - \ell}})  }-o(1) = \frac k{k - (k-1)(1-\eps )^{\frac 1{k - \ell}}} -o(1)
\end{equation*}
in $\ell$-wise $\eps$-dense $k$-uniform hypergraphs. 

\section{A Randomized Approximation Algorithm for Near-Regular Hypergraphs}
\label{sec:rand}

We now prove a complementary result, yielding an approximation factor for the problem that is parameterized by both the average and the maximum degree of the input hypergraph. Our algorithm and analysis extend the work of Imamura and Iwama~\cite{II05}. 

\begin{theorem}
\label{thm:mainsampling}
For every $\eps>0$ and $k=o(\log n)$, there is a randomized approximation algorithm which computes with
high probability a solution for the Vertex Cover problem on $k$-uniform hypergraphs with approximation ratio
$$
\frac{k}{1+(k-1)\left. \frac{\bar{d}}{k\Delta}\right.}+\eps,
$$ 
where $\bar{d}$ and $\Delta$ denote the average and maximum degree of the hypergraph, respectively.
The running time is $n^{O(1)}2^{O(k\cdot\psi(n)\log\log k)}$, which is polynomial if $k\cdot\psi(n)\log\log k = O(\log n)$ and
quasi-polynomial if $\psi(n)=\mathrm{polylog\ } n$.
\end{theorem}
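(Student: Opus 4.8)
The plan is to adapt the recursive sampling strategy of Imamura and Iwama to the hypergraph setting, using Lemma~\ref{lem:subset} as the final rounding step. The key observation is that a random subset of vertices obtained by independently keeping each vertex with a suitable probability will, with reasonable probability, contain a large enough fraction of a fixed minimum vertex cover $C$; once we have a candidate $W$ with $|W\cap C|\geq\delta|W|$ for the right $\delta$ and $|W|$ of linear size, Lemma~\ref{lem:subset} (with $j$ a large constant) yields the claimed factor $k/(1+(\delta k-1)|W|/(n-j))$, which we want to equal $k/(1+(k-1)\bar d/(k\Delta))+\eps$ in the limit. The density hypothesis $\bar d=\Omega(n^{k-1}/\log n)$ enters precisely to control the sample size needed and hence the $2^{O(\cdot)}$ factor in the running time.

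\medskip

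First I would set up the sampling. Fix an (unknown) minimum vertex cover $C$, and let $c:=|C|/n$. Because the hypergraph has average degree $\bar d$ and maximum degree $\Delta$, and every hyperedge must be hit by $C$, a double-counting argument bounds $c$ from below: summing degrees over $C$ covers all $m=n\bar d/k$ hyperedges, so $|C|\Delta\geq m$, giving $c\geq\bar d/(k\Delta)$. This inequality is the heart of the matter — it is exactly the quantity appearing in the target ratio. Next, I would sample a set $T$ by including each vertex independently with probability $p$, for a sequence of geometrically spaced guesses of $p$ (only $O(\log n)$ values are needed, and one can also guess $c$ up to a $(1+\eps)$ factor). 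Conditioned on a good guess, $|T|$ concentrates and $|T\cap C|/|T|$ concentrates around $c$ by standard Chernoff bounds, provided $|T|$ is large enough — this is where one pays $2^{O(\cdot)}$: to drive the failure probability below an inverse polynomial while the fraction inside $C$ is only guaranteed to be $c\gtrsim\bar d/(k\Delta)$, which may itself be as small as $\Theta(1/\mathrm{poly})$ when $\Delta$ is large, one needs $|T|$ up to $\mathrm{poly}(n)$ or the recursion depth to grow, producing the stated $n^{O(1)}2^{O(k\psi(n)\log\log k)}$ bound.

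\medskip

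The recursion is the part that requires the most care, and I expect it to be the main obstacle. Rather than sampling once, Imamura--Iwama recurse: having fixed a sample $T$, one does not know which vertices of $T$ lie in $C$, so one recurses on a restricted instance — roughly, guess the intersection pattern of $C$ with a small ``core'' of $T$, delete or contract accordingly, and repeat on the smaller hypergraph with updated degree parameters. The branching factor at each level is governed by how finely one must guess, which is where the $\log\log k$ appears (one balances the $k$ possible ``roles'' a vertex can play in a $k$-edge against the recursion depth $\psi(n)$). I would need to verify that (i) the density/degree ratio $\bar d/\Delta$ does not degrade too badly down the recursion, so that the final application of Lemma~\ref{lem:subset} still gives the claimed factor; (ii) the total number of leaves in the recursion tree is $2^{O(k\psi(n)\log\log k)}$; and (iii) at some leaf, with high probability, we have extracted a linear-size $W$ with $|W\cap C|/|W|\geq 1-o(1)$ times the right constant, i.e. $\delta$ effectively $1$ in the relevant limit so that $(\delta k-1)|W|/(n-j)\to(k-1)\bar d/(k\Delta)$. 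Finally I would combine: run the recursive sampler, for each of the polynomially-or-quasipolynomially-many leaf candidates $W$ apply the Lemma~\ref{lem:subset} algorithm, and output the best cover found; a union bound over leaves and guesses keeps the overall success probability high, and the additive $\eps$ absorbs the $(1+\eps)$-slack in guessing $c$ and $p$ together with the $o(1)$ error term from Lemma~\ref{lem:subset}.
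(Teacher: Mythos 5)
There is a genuine gap at the very center of your proposal: the mechanism you give for producing the set $W$ cannot work. You propose to let $W$ be a uniformly random sample $T$ (each vertex kept independently with probability $p$) and invoke concentration to get $|T\cap C|/|T|\approx c=|C|/n$. But that is exactly the fraction of $C$ in the \emph{whole} vertex set, so the sample carries no bias toward $C$; plugging $\delta=c$ into Lemma~\ref{lem:subset} recovers only the trivial bound $\min\{k,\,n/|C|\}$, which equals $k$ whenever $c\leq 1/k$ --- and $c$ can be that small precisely in the regular subdense case the theorem is aimed at. The target ratio $k/(1+(k-1)\frac{\bar d}{k\Delta})$ requires $\delta$ essentially equal to $1$ (a set $W$ almost entirely \emph{inside} $C$) of size about $\beta n$ with $\beta=\bar d/(k\Delta)$, and no amount of unbiased sampling produces such a set. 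Your double-counting inequality $|C|\Delta\geq m$, i.e.\ $c\geq\beta$, is correct but is not where $\bar d/(k\Delta)$ enters the paper's argument; it is a lower bound on $|C|$ and by itself gives no handle on finding vertices of $C$.

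What the paper actually does is deterministic extraction plus cheap randomized branching. At each of $t=\psi(n)(\beta/c-1)$ outer steps it takes the $cn/\psi(n)$ \emph{highest-degree} vertices $H$ of the current residual hypergraph: either a $p$-fraction of $H$ lies in $C$ (so $H$ itself is a good candidate), or some $v\in H\setminus C$ exists, in which case $C$ must contain a vertex cover of the dense $(k-1)$-uniform link of $v$ and one recurses on that link (the structure of Lemmas~\ref{lem:degcond} and~\ref{lem:extract}). Randomness is used only to locate such a $v$ by sampling $l=\Theta(\log k)$ vertices of $H$, which bounds the branching factor by $l^k$ per node and is the sole source of the $\log\log k$ in the running time ($\log l=\Theta(\log\log k)$), not a balance of ``roles'' of vertices in edges. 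The quantity $\beta$ enters through Lemma~\ref{lem:ubei}: as long as fewer than $\beta n$ vertices have been removed, the residual hypergraph retains at least $\Delta s_i$ edges, i.e.\ enough density for the top-degree extraction to keep yielding $cn/\psi(n)$ new cover vertices per step (Lemma~\ref{lem:extract2}); this is what caps $|W|$ at $\approx\beta n$ and produces the factor $k/(1+(k-1)\beta)$ via Lemma~\ref{lem:subset} with $\delta\to 1$. Your proposal would need to be rebuilt around this biased, density-driven extraction; as written, the sampling step defeats the purpose.
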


By letting $\bar{d}/\Delta \to 1$ and $\bar{d} = \Omega \left( \frac{n^{k-1}}{\log n}\right)$, we get the following corollary.
\begin{corollary}
The Vertex Cover problem is approximable in polynomial time within a factor $\frac {k}{2-1/k}$ in subdense regular $k$-uniform hypergraphs with $k=O(1)$.
\end{corollary}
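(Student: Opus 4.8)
The plan is to obtain this as a direct instantiation of Theorem~\ref{thm:mainsampling}. In a $d$-regular $k$-uniform hypergraph every vertex has degree exactly $d$, so $\bar{d} = \Delta = d$ and the ratio $\bar{d}/\Delta$ equals $1$ exactly (no limiting argument is even needed). Substituting $\bar{d}/\Delta = 1$ into the bound of Theorem~\ref{thm:mainsampling} yields an approximation factor of
\[
\frac{k}{1 + (k-1)\frac{1}{k}} + \eps \;=\; \frac{k}{2 - \frac{1}{k}} + \eps .
\]

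Next I would check that the algorithm runs in polynomial time on subdense regular instances. The running-time hypothesis of Theorem~\ref{thm:mainsampling} is $k\cdot\psi(n)\log\log k = O(\log n)$; since we assume $k = O(1)$, this collapses to $\psi(n) = O(\log n)$. The parameter $\psi(n)$ is the one governing the sample sizes in the recursive sampling procedure, and, as one reads off from the construction underlying Theorem~\ref{thm:mainsampling}, it is controlled by the quantity $n^{k-1}/\Delta$ up to factors depending only on $k$ and $\eps$. Using subdensity, $\bar{d} = \Omega(n^{k-1}/\log n)$, together with regularity, $\Delta = \bar{d}$, we get $n^{k-1}/\Delta = O(\log n)$, hence $\psi(n) = O(\log n)$, so the algorithm is polynomial for $k = O(1)$.

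Finally, to present the clean factor $\frac{k}{2-1/k}$ rather than $\frac{k}{2-1/k} + \eps$, I would let $\eps = \eps(n)$ tend to $0$ slowly (for instance $\eps = 1/\log n$); since the dependence on $\eps$ enters only the constants hidden in the polynomial running time, this keeps the procedure polynomial while absorbing the additive error into an $o(1)$ term. The only point that genuinely needs care — and the step I would verify first against the proof of Theorem~\ref{thm:mainsampling} — is pinning down exactly how $\psi(n)$ depends on $\Delta$, so as to confirm that the subdensity lower bound on $\bar{d}$ is precisely what is required to pull $\psi(n)$ down to $O(\log n)$; once that is established, the corollary follows immediately with no further computation.
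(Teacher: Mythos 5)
Your proposal is correct and matches the paper's own (one-line) derivation: the corollary is obtained by plugging $\bar{d}/\Delta = 1$ into Theorem~\ref{thm:mainsampling} and observing that subdensity plus regularity gives $\psi(n) = \binom{n}{k-1}/\Delta = O(\log n)$, so the running time $n^{O(1)}2^{O(k\psi(n)\log\log k)}$ is polynomial for $k=O(1)$. Your identification of $\psi(n)$ as the quantity to control, and your handling of the additive $\eps$, supply details the paper leaves implicit but do not change the argument.
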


\subsection{Outline}

The algorithm first iteratively removes vertex subsets, until a sufficiently small set of vertices remain, then applies the trivial $k$-approximation algorithm on the remaining induced hypergraph. We first suppose that at every step $i$ of the algorithm, we are able to guess a sufficiently large subset of an optimal solution of the current hypergraph $G_i$. This subset of vertices is removed, together with the edges that they cover, to form $G_{i+1}$. We will see in the next subsection how we can sample the set $\cal W$ computed in the algorithm of section~\ref{sec:approx} to perform this guessing step efficiently. The union of the removed sets will form the set $W$ allowing us to use Lemma~\ref{lem:subset}. We aim at obtaining such a set $W$ of size approximately $\beta n$, with $\beta := \bar{d} /(k\Delta )$.

Letting $G_i$ be the hypergraph considered at the $i$th step, we denote by $n_i$ its number of vertices, by $E_i$ its edge set, by $\eps_i := |E_i| / {n_i\choose k}$ its density (in the $\ell = 0$ sense), and by $\bar{d}_i := |E_i|k/n_i$ its average degree. We also let $\psi (n) := {n\choose k-1} / \Delta$. Let $s_i := n_i - (1-\beta )n$. Note that $s_i = 0\Rightarrow n_i=(1-\beta )n \Rightarrow n-n_i = \beta n$. Since $n-n_i$ is the size of the extracted set $W$, $s_i$ can serve as a measure of progress of the procedure. At every step, we remove $c  n /\psi (n)$ vertices, until $s_i\leq c  n$, for a small constant $c \in [0,1]$. Thus at the end of the procedure, we will have $s_i\leq c n \Rightarrow |W|\geq (\beta - c )n$. We now show we can always find a set of this size contained in a minimum vertex cover. From Lemma~\ref{lem:extract}, we know there exists such a set of size at least $r_i$, where
$$
r_i := \left( 1-\left( 1-\eps_i \right)^{\frac 1k}\right) (n_i-k+1) .
$$

We suppose that we can efficiently guess this subset, and prove that it is large enough.

\begin{lemma}
\label{lem:extract2}
provided $s_i \geq c  n$, we have the following inequality:
\begin{equation}
\label{eq:numvtx}
r_i \geq c \frac{n}{\psi (n)} .
\end{equation}
\end{lemma}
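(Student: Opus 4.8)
The goal is to establish the lower bound $r_i \geq c\,n/\psi(n)$ on the size of the subset of the optimal cover that we can extract at step $i$, under the assumption that we are not yet done, i.e.\ $s_i \geq c\,n$, equivalently $n_i \geq (1-\beta)n + c\,n$. Recall $r_i = \bigl(1-(1-\eps_i)^{1/k}\bigr)(n_i - k + 1)$ and $\psi(n) = \binom{n}{k-1}/\Delta$. The first thing I would do is bound the $(n_i - k + 1)$ factor trivially from below: since $s_i \geq c\,n$ and $\beta \leq 1$, we have $n_i \geq c\,n$, so for $n$ large and $k = o(\log n)$ we get $n_i - k + 1 \geq \tfrac{1}{2}c\,n$ (or just $n_i - k + 1 = \Omega(c\,n)$; the constant can be absorbed later, or one tracks it carefully). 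So the heart of the matter is to lower-bound the density-dependent factor $1 - (1-\eps_i)^{1/k}$.

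The key step is to show $1 - (1-\eps_i)^{1/k} \geq \eps_i/k$, which follows from the elementary inequality $(1-x)^{1/k} \leq 1 - x/k$ for $x \in [0,1]$ and $k \geq 1$ (concavity of $t \mapsto t^{1/k}$, or Bernoulli). Hence $r_i \geq \tfrac{\eps_i}{k}\,(n_i - k + 1)$. Now I need a lower bound on the density $\eps_i = |E_i|/\binom{n_i}{k}$ of the residual hypergraph $G_i$. Here I would argue that the residual hypergraph still contains enough edges: the total number of edges removed across all steps is at most (number of removed vertices) $\times \Delta = (n - n_i)\Delta \leq \beta n \cdot \Delta$, while the original hypergraph has $m = \bar d\, n / k$ edges; combining with $\beta = \bar d/(k\Delta)$ gives $\beta n \Delta = \bar d\, n/k = m$, which is not quite enough by itself — so the correct handle is probably to bound $|E_i|$ from below by a more careful counting, or to observe that what we actually need is only $\eps_i \geq$ something like $\Omega(k\,\Delta/\binom{n_i}{k})$ after multiplying by $(n_i-k+1)/k$. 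Writing $r_i \geq \tfrac{|E_i|}{k}\cdot\tfrac{n_i - k+1}{\binom{n_i}{k}} = \tfrac{|E_i|(n_i-k+1)}{k\binom{n_i}{k}}$ and using $\binom{n_i}{k} = \binom{n_i}{k-1}\cdot\tfrac{n_i-k+1}{k}$, this simplifies to $r_i \geq |E_i|/\binom{n_i}{k-1}$. Since $|E_i| \geq$ some fixed fraction of the edges incident to the remaining $n_i$ vertices, and $\binom{n_i}{k-1} \leq \binom{n}{k-1} = \psi(n)\Delta$, the inequality reduces to showing $|E_i| \geq c\,n\,\Delta$, i.e.\ the residual graph retains $\Omega(n\Delta)$ edges.

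The main obstacle, then, is precisely this invariant: proving that $|E_i| \geq c\,n\,\Delta$ (equivalently $\eps_i$ is bounded below appropriately) for as long as $s_i \geq c\,n$. The idea is that each of the at most $n - n_i \leq \beta n$ removed vertices kills at most $\Delta$ edges, so $|E_i| \geq m - \beta n \Delta = \tfrac{\bar d n}{k} - \tfrac{\bar d n}{k} = 0$ — so the crude bound is useless, and one must instead exploit that the removed sets are subsets of a \emph{minimum} vertex cover, hence the edges they destroy are spread out, or that we stop early (when $s_i \geq cn$, we have removed strictly fewer than $\beta n$ vertices, namely at most $(\beta - c)n$, leaving $|E_i| \geq m - (\beta-c)n\Delta = c\,n\,\Delta$). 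That last observation looks like the crux: since we only remove vertices while $s_i > c\,n$, the total removed is at most $(\beta - c)n + c\,n/\psi(n) \leq (\beta - c')n$ for a slightly smaller constant, giving $|E_i| \geq m - (\beta - c')n\Delta = c'\,n\,\Delta = \Omega(n\Delta)$, and chaining the inequalities above yields $r_i \geq |E_i|/\binom{n}{k-1} \geq c'\,n\Delta/(\psi(n)\Delta) = c'\,n/\psi(n)$, which is the claim up to renaming the constant. I would double-check the off-by-one in the stopping condition (whether $s_i \geq cn$ or $s_i > cn$ at the step in question) to make sure the removed count is genuinely bounded away from $\beta n$.
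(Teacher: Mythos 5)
Your proof is correct and follows essentially the same route as the paper's: the inequality $1-(1-\eps_i)^{1/k}\geq \eps_i/k$, the identity $\binom{n_i}{k}=\binom{n_i}{k-1}\tfrac{n_i-k+1}{k}$ reducing the claim to $r_i\geq |E_i|/\binom{n}{k-1}$, and the edge-count lower bound $|E_i|\geq \Delta s_i \geq c\,n\,\Delta$ obtained from the fact that the $n-n_i=\beta n-s_i$ removed vertices each cover at most $\Delta$ edges (the paper packages this last step as Lemmas~\ref{lem:maxedges} and~\ref{lem:ubei} via an independent-set argument, but it is the same counting). The off-by-one you worry about does not arise, since the hypothesis $s_i\geq cn$ is stated for the current step, giving $n-n_i\leq(\beta-c)n$ directly.
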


We need three simple intermediate results.
\begin{lemma}
\label{lem:maxedges}
Let $H$ be a $k$-uniform hypergraph with $n$ vertices, maximum degree $\Delta$ and maximum independent set of size at least $x$. Then 
$H$ has at most $\Delta (n-x)$ edges.
\end{lemma}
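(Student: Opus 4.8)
Let $H$ be a $k$-uniform hypergraph with $n$ vertices, maximum degree $\Delta$ and maximum independent set of size at least $x$. Then $H$ has at most $\Delta(n-x)$ edges.

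The plan is to fix a maximum independent set $I$ of $H$, so that $|I| \ge x$ by hypothesis, and then bound the number of edges by counting incidences at the vertices \emph{outside} $I$. Since $I$ is independent and $H$ is $k$-uniform, no hyperedge $e \in E$ can satisfy $e \subseteq I$; hence every hyperedge contains at least one vertex of $V \setminus I$.

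From this observation, summing degrees over $V \setminus I$ counts every edge at least once, so $|E| \le \sum_{v \in V \setminus I} d(v)$. Each term is at most $\Delta$, and there are $|V \setminus I| = n - |I| \le n - x$ such terms, giving $|E| \le \Delta(n - |I|) \le \Delta(n - x)$, which is the claimed bound.

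There is essentially no obstacle here: the only point to be slightly careful about is that a hyperedge may contain several vertices outside $I$, so the degree sum over $V \setminus I$ is an overcount of $|E|$ — but that inequality goes in the right direction, so it causes no trouble. (If one wanted equality-type tightness one would note it is attained, e.g.\ when $V \setminus I$ is a set of $n-x$ pairwise ``edge-disjoint-looking'' vertices each of degree exactly $\Delta$, but the statement only needs the upper bound.)
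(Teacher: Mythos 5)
Your proof is correct and is essentially the same argument as the paper's: both rest on the observation that the complement of an independent set is a vertex cover whose vertices each meet at most $\Delta$ edges. The paper phrases it through the minimum vertex cover size $\tau$ and the identity $n-\tau \ge x$, while you work directly with $V\setminus I$, but the content is identical.
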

\begin{proof}
Every vertex covers at most $\Delta$ edges, hence the size of a minimum vertex cover of $H$, say $\tau$, satisfies $|E(H)| \leq \Delta \tau$. Also, by definition, the largest independent set of $H$ has size $n-\tau \geq x$, hence $\tau \leq n - x$.
\end{proof}

\begin{lemma}
\label{lem:ubei}
$|E_i| \geq \Delta s_i$
\end{lemma}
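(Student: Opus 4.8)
We need to show $|E_i| \geq \Delta s_i$, where $s_i := n_i - (1-\beta)n$ and $\beta = \bar d/(k\Delta)$. The plan is to relate the number of edges in the current hypergraph $G_i$ to the number that were removed in getting there, and to exploit the fact that every removal step deletes a set of vertices that covers "many" edges only if those vertices have high degree in the original graph. First I would recall that the edges of $G$ are partitioned: those still present in $G_i$ (namely $E_i$), and those covered by the extracted set $W$ of size $n - n_i$. Each vertex of the original hypergraph $G$ has degree at most $\Delta$, so the edges covered by $W$ number at most $\Delta(n-n_i)$. Hence $|E_i| \geq |E(G)| - \Delta(n - n_i)$.

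Next I would use that $G$ has average degree $\bar d$, so $|E(G)| = \bar d n / k$. Substituting gives
\begin{equation*}
|E_i| \;\geq\; \frac{\bar d\, n}{k} - \Delta(n - n_i) \;=\; \Delta\left( \frac{\bar d\, n}{k\Delta} - n + n_i \right) \;=\; \Delta\bigl( n_i - (1-\beta)n \bigr) \;=\; \Delta s_i,
\end{equation*}
using $\beta n = \bar d n/(k\Delta)$. This is the entire argument; no case analysis or estimation is needed beyond the two facts "average degree $\bar d$ determines $|E(G)|$ exactly" and "a set of $t$ vertices covers at most $\Delta t$ edges."

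The only subtlety — and the point I would be careful about — is making sure the bookkeeping of which edges are "removed" is tight: when we pass from $G_i$ to $G_{i+1}$ we delete a guessed subset of an optimal cover together with \emph{all} edges it touches, so the set of edges of $G$ surviving into $G_i$ is exactly $E_i$ and the complement is exactly the set of $G$-edges incident to the accumulated extracted set $W$ with $|W| = n - n_i$. Since degrees are measured in the original $G$ (that is what $\Delta$ refers to throughout), the bound $|E(G)\setminus E_i|\le \Delta(n-n_i)$ holds by a union bound over the at most $n-n_i$ vertices of $W$. Given that, the inequality is immediate, so I expect no real obstacle here; this lemma is purely a counting identity packaging the definitions of $s_i$, $\beta$, and $\bar d$.
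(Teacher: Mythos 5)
Your proof is correct and is essentially the paper's argument: both bound the removed edges $E\setminus E_i$ by $\Delta(n-n_i)=\Delta(\beta n - s_i)$ and subtract from $|E|=\bar d n/k$. The only cosmetic difference is that you bound the removed edges directly as those incident to the extracted set $W$, whereas the paper phrases the same counting via Lemma~\ref{lem:maxedges} applied to $G-E_i$, using that the vertices of $G_i$ form an independent set there.
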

\begin{proof}
The hypergraph $G' := G - E_i$ has an independent set of size at least $n_i = s_i + (1-\beta )n$, as by definition all the vertices of $G_i$ form an independent set in $G'$. Thus from Lemma~\ref{lem:maxedges}, $G'$ can have at most $\Delta (n-(s_i + (1-\beta )n)) = \Delta (\beta n - s_i)$ edges. So we obtain 
$$
|E_i| = |E| - |E(G')| \geq \frac{\bar{d}n}{k} - \Delta (\beta n-s_i) = \Delta \beta n - \Delta (\beta n -s_i) = \Delta s_i .
$$
\end{proof}

\begin{lemma}
\label{lem:boundr}
$1-\left( 1-\eps \right)^{\frac 1k} \geq \frac{\eps}{k}\ \forall \eps\in [0,1], k \geq 1$.
\end{lemma}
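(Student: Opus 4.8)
The plan is to reduce the claim to the concavity of the map $t\mapsto t^{1/k}$ on $[0,\infty)$, equivalently to a form of Bernoulli's inequality. Substituting $x:=1-\eps\in[0,1]$, the asserted inequality $1-(1-\eps)^{1/k}\ge \eps/k$ becomes $x^{1/k}\le 1-\frac{1-x}{k}=\frac{x}{k}+\bigl(1-\tfrac1k\bigr)$. The right-hand side is precisely the tangent line to the concave function $f(x)=x^{1/k}$ at $x=1$, since $f(1)=1$ and $f'(1)=1/k$. A concave function lies below each of its tangents, so this bound holds for all $x\ge 0$, in particular for $x\in[0,1]$, and reversing the substitution gives the lemma.

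Alternatively, and more self-containedly, I would give a one-line calculus argument. Set $h(\eps):=1-(1-\eps)^{1/k}-\eps/k$ for $\eps\in[0,1]$. Then $h(0)=0$, and $h'(\eps)=\frac1k\bigl((1-\eps)^{1/k-1}-1\bigr)\ge 0$, because $1/k-1\le 0$ together with $1-\eps\le 1$ forces $(1-\eps)^{1/k-1}\ge 1$. Hence $h$ is nondecreasing on $[0,1]$, so $h(\eps)\ge h(0)=0$, which is exactly the claim. (The case $k=1$ yields equality; for $k>1$ equality holds only at $\eps=0$.)

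Since this is an elementary analytic fact, there is no genuine obstacle here. The only points requiring care are the boundary cases $\eps\in\{0,1\}$ and $k=1$ — where one is raising $0$ to a small power or dividing by $k$ — and the direction of the inequality when invoking Bernoulli: the exponent $1/k$ lies in $(0,1]$, which is exactly the regime in which $(1+y)^{1/k}\le 1+y/k$ rather than the reverse.
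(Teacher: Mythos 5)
Your proof is correct. Note that the paper states this lemma without giving any proof at all (it is treated as an elementary fact), so there is no argument of the authors' to compare against; either of your two routes --- the tangent-line bound for the concave map $x\mapsto x^{1/k}$ at $x=1$, or the monotonicity of $h(\eps)=1-(1-\eps)^{1/k}-\eps/k$ via $h'(\eps)=\frac1k\bigl((1-\eps)^{1/k-1}-1\bigr)\ge 0$ --- cleanly fills this gap, and your attention to the boundary cases (in particular $\eps=1$ with $k>1$, where the derivative blows up but $h$ remains continuous with $h(1)=1-1/k\ge 0$) is all the care the statement requires.
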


\begin{proof}[Proof of Lemma~\ref{lem:extract2}]
Combining the two previous lemmas, we obtain:
\begin{eqnarray}
\frac{r_i}{s_i} & \geq & \frac{\eps_i (n_i-k+1)}{k s_i}\ \ \text{(from\ Lemma~\ref{lem:boundr})} \\
	& = & \frac{|E_i| (n_i-k+1)}{k s_i {n_i \choose k}} \\
	& \geq & \frac{\Delta s_i (n_i-k+1)}{k s_i {n_i \choose k}}\ \ \text{(from\ Lemma~\ref{lem:ubei})} \\
	& = & \frac{\Delta}{{n_i \choose k - 1}} \geq \frac{\Delta}{{n \choose k - 1}} = \frac 1{\psi (n)} \\
r_i & \geq & \frac{s_i}{\psi(n)} \geq c \frac n{\psi (n)}.	 
\end{eqnarray}
\end{proof}

Thus we know we can extract this number of vertices at each step. The number $t$ of required steps is therefore
\begin{equation}
t := \frac{(\beta - c ) n}{c \frac n{\psi (n)}} = \psi (n) (\beta / c  - 1) .
\end{equation}

\subsection{The recursive sampling procedure}

We now define a recursive sampling procedure that will allow us to efficiently guess the subsets. The procedure {\bf $IR$} returns a small set of candidate subsets. It is a sampling version of the procedure given in the previous section.\\

\noindent\textbf{Procedure $IR(G,l)$} (Inner Recursion)\\[1ex]
Input: a $k$-uniform hypergraph $G = (V, E)$ and $l\in \mathbb{N}$
\begin{enumerate}
\item ${\cal W} \gets \emptyset$
\item  if $k = 1$ then
\begin{enumerate}
\item return $\{ C \}$, where $C$ is set of $c n/\psi(n)$ arbitrary vertices of $E$ 
\end{enumerate}
\item else:
\begin{enumerate}
\item let $H$ be the set of the first $c n/\psi(n)$ highest-degree vertices (breaking ties arbitrarily)
\item add $H$ to $\cal W$
\item let $H'\subseteq H$ be a random subset of $l$ vertices
\item for each $v\in H'$: 
\begin{enumerate}
\item let $G'$ be the $(k-1)$-uniform hypergraph $(V\setminus \{v\}, \{e\setminus\{v\}: e\in E, v\in e\})$
\item call the procedure recursively with the parameters $k-1, G', l$; let ${\cal W}'$ be its output
\item add the sets of ${\cal W}'$ to ${\cal W}$
\end{enumerate}
\item return $\cal W$
\end{enumerate}
\end{enumerate}

The procedure {\bf $ER$} below iterates this extraction until $s_i\leq c  n$, using $t$ recursion levels. It is initially called with $i = 0$ and $t=\psi (n) (\beta / c  - 1)$. It also uses a variable $l$, which sets the size of the sample used.\\

\noindent\textbf{Algorithm $ER(G, i)$} (Outer Recursion)\\[1ex]
Input: a $k$-uniform hypergraph $G$ with $i\in \mathbb{N}$, $i\leq t$
\begin{enumerate}
\item ${\cal W} \gets \emptyset$
\item if $i<t$ then:
\begin{enumerate}
\item ${\cal W} \gets IR(G, l)$
\item return $\min\{ W' \cup ER(G \setminus W',i+1)\mid W'\in {\cal W} \}$
\end{enumerate}
\item else ($i=t$)
\begin{enumerate}
\item apply a $k$-approximation algorithm to $G$ and let $C$ be the resulting vertex cover
\item return $C$
\end{enumerate}
\end{enumerate}

Let us choose a constant $p\in (0,1)$, and define the sample size $l$ as 
$$
l := \lceil \log (1-p^{\frac 1k}) / \log p\rceil .
$$ 
Note that $1-p^{\frac 1k} = \Theta (\frac 1k)$, hence $l = \Theta (\log k)$.
With this value of $l$, the procedure $IR$ has the following property. 
We denote by $C$ an arbitrary vertex cover of the input hypergraph.
\begin{lemma}
The set $\cal W$ returned by the procedure $IR$ contains a subset $W'$ such that 
$|W'\cap C| \geq p |W'|$ with probability at least $p$.
\end{lemma}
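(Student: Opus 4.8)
The plan is to induct on the uniformity $k$ of the hypergraph handed to $IR$, keeping the sample size $l$ fixed (it is passed unchanged to every recursive call). Fix an arbitrary vertex cover $C$ of the current hypergraph, and let $q_k$ denote the probability that the collection ${\cal W}$ returned by $IR$ contains a set $W'$ with $|W'\cap C|\ge p|W'|$; call such a $W'$ a \emph{witness}. I would actually prove the sharper bound $q_k\ge(1-p^l)^{k-1}$ and only feed in the definition of $l$ at the very end. Strengthening the inductive claim this way is the conceptual crux: the naive estimate $q_k\ge(1-p^l)\,q_{k-1}$ combined with a flat hypothesis $q_{k-1}\ge p$ gives only $(1-p^l)p<p$, which is worthless, whereas with the sharper hypothesis the recursion telescopes correctly. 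The base case $k=1$ is immediate: $IR$ returns a set of $cn/\psi(n)$ vertices, each of which is a singleton edge and hence belongs to every vertex cover of $G$; so it is contained in $C$ and $q_1=1=(1-p^l)^0$.

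For the inductive step I would reproduce the case distinction from the proof of Lemma~\ref{lem:extract}. Let $H$ be the set of the $cn/\psi(n)$ highest-degree vertices, which $IR$ always puts into ${\cal W}$. If $|H\cap C|\ge p|H|$ then $H$ is itself a witness and $q_k=1\ge(1-p^l)^{k-1}$. Otherwise $|H\setminus C|>(1-p)|H|$, and everything hinges on the random sample $H'\subseteq H$, $|H'|=l$, meeting $V\setminus C$. Whenever $H'$ contains some $v\notin C$, the residual hypergraph $G'=(V\setminus\{v\},\{e\setminus\{v\}:v\in e\in E\})$ still has $C$ (equivalently $C\cap(V\setminus\{v\})$, since $v\notin C$) as a vertex cover, because every edge of $G'$ comes from an edge of $G$ that $C$ hits at a vertex other than $v$; hence the induction hypothesis applies to the $(k-1)$-uniform hypergraph $G'$ with cover $C$, and the recursive call $IR(G',l)$ — whose coins are fresh and independent of the draw of $H'$ — returns a witness for $G'$ with probability at least $(1-p^l)^{k-2}$. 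That witness lies in ${\cal W}'\subseteq{\cal W}$ and, covering a $p$-fraction of itself by $C$, is also a witness for $G$.

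It remains to lower-bound $\Pr[H'\cap(V\setminus C)\neq\emptyset]$. I would bound the complement by a hypergeometric tail estimate: $\Pr[H'\subseteq H\cap C]={|H\cap C|\choose l}/{|H|\choose l}=\prod_{i=0}^{l-1}\frac{|H\cap C|-i}{|H|-i}\le(|H\cap C|/|H|)^l<p^l$, using $\frac{a-i}{b-i}\le\frac ab$ for $0\le a\le b$ and $i\ge 0$ (with the convention that the binomial ratio is $0$ if $|H\cap C|<l$, and noting $|H|=cn/\psi(n)\ge l$ for $n$ large). Thus the sample succeeds with probability $\ge 1-p^l$, and multiplying by the recursive guarantee gives $q_k\ge(1-p^l)\cdot(1-p^l)^{k-2}=(1-p^l)^{k-1}$, which closes the induction. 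Finally, unwinding $l=\lceil\log(1-p^{1/k})/\log p\rceil$: since $\log p<0$ this forces $p^l\le 1-p^{1/k}$, i.e. $1-p^l\ge p^{1/k}$, so $q_k\ge(p^{1/k})^{k-1}=p^{(k-1)/k}\ge p$ because $0<p<1$. The genuine obstacle here is not any computation but the design of the induction — realizing that one must carry the $(1-p^l)^{k-1}$ bound rather than a flat $p$, and that $l$ must be calibrated against the original $k$ so that the product of the $k-1$ per-level success probabilities still clears $p$; the hypergeometric estimate and the observation that deleting a vertex outside $C$ preserves $C$ as a cover of the residual hypergraph are routine.
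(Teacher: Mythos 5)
Your proof is correct and follows essentially the same route as the paper's: the case distinction on whether $H$ already meets $C$ in a $p$-fraction, the bound $1-p^l$ on the sample hitting $V\setminus C$, and the observation that $C$ remains a cover of the residual hypergraph, iterated down the recursion and closed using the definition of $l$. You merely make explicit the induction the paper compresses into ``by iterating'' (carrying the invariant $(1-p^l)^{k-1}$, which is in fact one factor sharper than the paper's stated $(1-p^l)^k$) and justify the sampling step with a hypergeometric tail bound where the paper argues informally.
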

\begin{proof}
Let $H$ be the first $c n/\psi(n)$ highest-degree vertices. 
If $|H\cap C| \geq p |H|$ then we are done. Thus we can suppose that $|H\cap C| < p |H|$, and 
the probability that a random vertex of $H$ belongs to $C$ is at most $p$. Thus with probability
at least $1-p^l$ we get a vertex $v\not\in C$ in the selected sample, and $C$ must contain a vertex cover
of the hypergraph $G'$ defined by $v$. By iterating, we eventually get that the probability is at least 
$(1-p^l)^k \geq p$, from the definition of $l$.
\end{proof}

The procedure $ER$ performs a recursive exploration of a search tree, branching on every subset $W'$ in the set of candidates $\cal W$.
A root-to-leaf path in this tree yields a set $W$, defined as the union of all the candidates $W'$ selected along the path.  
We now prove that with high probability, this search tree contains a path yielding a suitable set $W$.

\begin{lemma}
\label{lem:Wprop}
For any $\delta >0$, the procedure $ER$ constructs a set $W$ of $(\beta - c) n$ vertices, 
such that $| W \cap C | \geq (1-\delta) p^2 |W|$, with probability at least
$1 - e^{-\psi(n)(\beta/c - 1) p \frac{\delta^2}2}$.
\end{lemma}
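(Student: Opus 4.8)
The plan is to analyze the search tree of $ER$ and argue that a suitable root-to-leaf path exists with high probability, then apply a concentration bound along that path. First I would fix the optimal vertex cover $C$ of the original hypergraph $G$ and track, level by level, what happens when $ER$ is called on $G_i := G \setminus W$, where $W$ is the union of candidates selected so far. The key observation is that $C \cap V(G_i)$ is still a vertex cover of $G_i$ (removing vertices and the edges they cover cannot create uncovered edges outside the removed set), so the previous lemma applies to $G_i$ with this cover: with probability at least $p$, the set $\mathcal{W}$ returned by $IR(G_i, l)$ contains some $W'$ with $|W' \cap (C \cap V(G_i))| \geq p|W'|$, and by Lemma \ref{lem:extract2} (adapted: each $IR$ level contributes $cn/\psi(n)$ vertices) every such $W'$ has size $cn/\psi(n)$. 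Since $ER$ branches on \emph{every} $W' \in \mathcal{W}$, the search tree contains, at each level, a branch corresponding to a ``good'' $W'$ whenever the favorable event of probability $p$ occurs at that level.

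Next I would make the path selection concrete. Consider the (randomized) path that, at each of the $t = \psi(n)(\beta/c - 1)$ levels, follows a good candidate if one exists and follows an arbitrary candidate otherwise. Let $X_i \in \{0,1\}$ be the indicator that level $i$ is ``good,'' i.e.\ that $IR$ at that level produced a candidate $W'_i$ with $|W'_i \cap C| \geq p|W'_i|$; the previous lemma gives $\Pr[X_i = 1] \geq p$, and because $IR$'s internal randomness at distinct levels is independent, the $X_i$ are independent (or can be lower-bounded by independent Bernoulli$(p)$ variables via a standard coupling). Along this path the accumulated set $W = \bigcup_{i=1}^{t} W'_i$ has exactly $t \cdot cn/\psi(n) = (\beta - c)n$ vertices, and $|W \cap C| = \sum_i |W'_i \cap C| \geq \sum_i p|W'_i| \cdot X_i = p \cdot \frac{cn}{\psi(n)} \sum_i X_i$. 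Writing $X := \sum_{i=1}^{t} X_i$, we have $|W| = t \cdot \frac{cn}{\psi(n)}$ and hence $|W \cap C| \geq p \frac{X}{t} |W|$, so it suffices to show $X \geq (1-\delta) p\, t$ with the claimed probability.

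Finally, apply a multiplicative Chernoff bound to $X$: since $\mathbb{E}[X] \geq pt$, we get $\Pr[X < (1-\delta)pt] \leq \exp(-pt \delta^2/2) = \exp(-\psi(n)(\beta/c - 1)\, p\, \delta^2/2)$, which is exactly the failure probability in the statement; on the complementary event, $X \geq (1-\delta)pt$ yields $|W \cap C| \geq (1-\delta)p^2|W|$. I expect the main obstacle to be the bookkeeping that makes the ``follow a good branch'' path well-defined and shows its randomness decomposes into independent per-level trials: one must be careful that the event ``$IR$ at level $i$ produced a good candidate'' depends only on $IR$'s fresh random sample at that level and on the deterministic hypergraph $G_i$ reached by the chosen path, so that conditioning on the history of earlier levels does not spoil the probability-$p$ guarantee — this is where a clean coupling to i.i.d.\ Bernoulli$(p)$ variables does the real work, and the rest is a routine Chernoff estimate.
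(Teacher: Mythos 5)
Your proposal is correct and follows essentially the same route as the paper: define per-level success indicators $X_i$ with $\Pr[X_i=1]\geq p$, note that each successful level contributes at least $p\cdot cn/\psi(n)$ vertices of $C$ to $W$, and apply a multiplicative Chernoff bound to $X=\sum_i X_i$ with $E[X]\geq pt=\psi(n)(\beta/c-1)p$. In fact your write-up is more careful than the paper's, which silently treats the $X_i$ as independent Bernoulli$(p)$ variables and omits the path-selection and conditioning bookkeeping that you correctly identify as the only delicate point.
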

\begin{proof}

We introduce the random variable $X_i$ denoting the success in the $i$th step. Therefore, we
have $p(X_i=1) = p$ and $p(X_i=0) = 1-p$. We let $X = \sum_{i\in [t]}X_i $.
Clearly, we have
$$
E[X]\geq  t p = \psi(n)(\beta/c -1) p .
$$
Since $cn/\psi (n)$ vertices are chosen at every step, the expected number of vertices of $W$ that are contained in $C$ is 
\begin{eqnarray}
E[X] p c \frac n{\psi (n)} & \geq & (\beta/c - 1) p^2 c n \\
			 & =    & (\beta - c) p^2 n .
\end{eqnarray}
The claimed statement is obtained using Chernoff bounds.
\end{proof}

\subsection{Assembling the pieces}

The proof of Theorem~\ref{thm:mainsampling} follows directly from the previous lemmas. 

\begin{proof}[Proof of Theorem~\ref{thm:mainsampling}]
The algorithm is as follows. First, select a constant $c < 1$, arbitrarily small, 
and a probability $p$, that can be arbitrarily close to 1. Then compute the corresponding 
value of the sample size $l$ and the number of steps $t$, and run the procedure $ER$ with these parameters. 
From Lemmas~\ref{lem:Wprop} and \ref{lem:subset}, with probability at least 
$1-e^{-\psi(n)(\beta/c -1) p \frac{\delta^2}2}$, the procedure $ER$ achieves an approximation ratio
\begin{equation*}
\frac {k}{1 + ((1-\delta)p^2 k - 1) (\beta - c)}.
\end{equation*}
When $c\to 0$ and $p\to 1$, this ratio is arbitrarily close to $k/(1+(k-1)\beta)$.

The procedure $ER$ generates a search tree of height $t$ and fan-out less than $l^k$. At every node of the tree,
the procedure $IR$ is called, taking $O(n^{O(1)})+O(l^k)$ time. The overall running time of the procedure is therefore
$O(n^{O(1)} \cdot l^{kt})$. We have $kt = k \psi (n) (\beta / c  - 1) = \Theta(k\psi(n))$ and $l = \Theta(\log k)$. 
Hence the running time is $n^{O(1)}2^{O(k\psi(n)\log\log k)}$, as claimed.
\end{proof}

\section{Lower bounds}
\label{sec:lb}

In this section, we provide several inapproximability results based on two well-known conjectures, namely the UGC and $P\neq NP$. In particular, we show that the achieved approximation ratios in section~\ref{sec:approx} and \ref{sec:rand} are optimal in a specified range of $\Delta$ assuming the Unique Game Conjecture.\\

We introduce two hardness results on which our inapproximability bounds are based. The following hardness result is due to Khot and Regev~\cite{KR08} and is underlying the UGC.

\begin{theorem}\label{ugcmain}
Given a $k$-uniform hypergraph $H=(V,E)$, let $OPT$ denote an optimal vertex cover of $H$.
For every $\delta>0$, the following is UG-hard to decide:
$$|V|(1-\delta)\leq |OPT| \textrm{  or  } |OPT|\leq |V|(1/k+\delta)$$
\end{theorem}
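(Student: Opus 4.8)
My plan is to treat Theorem~\ref{ugcmain} as an imported result rather than to reprove it from scratch: it is exactly the Unique-Games hardness of $k$-uniform Vertex Cover established by Khot and Regev~\cite{KR08}, and in this paper it is used only as a black box in the lower-bound constructions that follow. If pressed for a proof, I would reproduce their reduction, whose structure is the following. One starts from a Unique Games instance $\mathcal{L}$ on a label set of size $R$ for which, under the Unique Games Conjecture, it is UG-hard to decide whether $\mathcal{L}$ is $(1-\eps)$-satisfiable or at most $\eps$-satisfiable. From $\mathcal{L}$ one builds a $k$-uniform (weighted) hypergraph $H$ whose vertex set is the disjoint union, over the variables $v$ of $\mathcal{L}$, of a copy of the cube $\{0,1\}^R$ equipped with the $p$-biased product measure, the bias $p$ being tuned to tend to $1/k$; the rational vertex weights are cleared at the end by duplicating vertices. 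For each constraint of $\mathcal{L}$, together with its bijection $\pi$, one adds as hyperedges the $k$-tuples of cube points (with one block's coordinates relabelled by $\pi$) lying in the support of an appropriate correlated distribution --- one whose one-dimensional marginals are $p$-biased but which, coordinatewise, never produces the all-$0$ pattern. This last property guarantees that every ``dictator'' set $\{x : x_i = 0\}$, which has measure $1-p$, is independent in $H$.

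For completeness, a labelling $\sigma$ satisfying a $(1-\eps)$-fraction of the constraints of $\mathcal{L}$ gives the independent set $I_\sigma = \bigcup_v \{x \in \{0,1\}^R : x_{\sigma(v)} = 0\}$ (on the respected constraints no hyperedge lies inside it, by the all-$0$-avoiding property), and $I_\sigma$ has measure $1-p$, hence arbitrarily close to $1-1/k$; equivalently $H$ has a vertex cover of relative size at most $1/k+\delta$. For soundness I would argue the contrapositive: if $H$ has an independent set of measure more than $\delta$, an averaging argument isolates a noticeable fraction of variable-blocks on which the induced independent set has non-negligible measure, a $p$-biased invariance/Fourier argument shows such a set must have an influential coordinate, and reading off a random influential coordinate of each such block as that variable's label decodes an assignment satisfying more than an $\eps$-fraction of $\mathcal{L}$ --- contradicting its soundness. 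Rescaling $\eps$ and $\delta$ then delivers the stated gap, namely $|OPT| \ge (1-\delta)|V|$ in the completeness case and $|OPT| \le (1/k+\delta)|V|$ in the soundness case.

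The step I expect to be the main obstacle in a self-contained treatment is precisely the soundness analysis: designing the correlated test distribution and fixing the bias $p$ so that dictators, and essentially only sets close to dictators, remain independent, and then combining the $p$-biased invariance principle with the influence-decoding apparatus that is standard for Unique Games reductions. Since all of this is carried out in~\cite{KR08}, here I simply take Theorem~\ref{ugcmain} as given and proceed to use it in the lower-bound constructions of this section.
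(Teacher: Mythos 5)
The paper itself offers no proof of Theorem~\ref{ugcmain}: it is stated as an imported hardness result of Khot and Regev~\cite{KR08}, exactly as you treat it, and your high-level sketch of their biased-long-code reduction (completeness via dictator independent sets of measure $1-p\approx 1-1/k$, soundness via influence decoding) is a faithful summary of that source. So your approach matches the paper's, and no gap arises.
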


On the other hand, Dinur et al.~\cite{DGKR05} proved the following NP-hardness result.

\begin{theorem}\label{npmain}
Given a $k$-uniform hypergraph $H=(V,E)$ with $k\geq 3$, let $OPT$ denote an optimal vertex cover of $H$.
For every $\delta>0$, the following is NP-hard to decide:
$$|V|(1-\delta)\leq |OPT| \textrm{  or  } |OPT|\leq |V|(1/(k-1)+\delta)$$
\end{theorem}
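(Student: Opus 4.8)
The plan is to reduce from a sufficiently strong version of Label Cover to the Vertex Cover gap problem, using a $p$-biased Long Code gadget with the bias $p$ tuned so that the YES instance produces an independent set of fractional size close to $(k-2)/(k-1)$, i.e.\ $p \approx 1/(k-1)$. Since an ordinary two-prover Label Cover is not by itself enough to reach the $(k-1)$ factor, I would start from a \emph{multilayered smooth} Label Cover instance: it has $\ell$ layers $L_1,\dots,L_\ell$, projection constraints between each pair of layers, a ``weak density'' property ensuring that no labeling can satisfy constraints on more than a constant fraction of layer-pairs, and the smoothness property that a random constraint out of a fixed vertex separates any fixed pair of labels with probability close to $1$. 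All of this is NP-hard to distinguish from perfectly satisfiable, for $\ell$ a constant and alphabet size / soundness as small as we like (via PCP plus parallel repetition plus the layering construction).

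Next I would set up the gadget. For each layer-vertex $v$ with label set $[R_v]$, introduce a block of hypergraph vertices indexed by $\{0,1\}^{R_v}$, each weighted by the $p$-biased product measure $\mu_p$; the vertex set $V$ of $H$ is the disjoint union of these blocks, discretized by a blow-up to get an unweighted instance. A hyperedge is formed by picking a random layer-pair, a random $k$-tuple of vertices/constraints among them (the combinatorial pattern here is what controls the uniformity $k$ and the gap), and $k$ points $x^{(1)},\dots,x^{(k)}$ from the relevant blocks such that, after pulling everything back along the projection constraints to a common layer, the $k$ encoded subsets have the property that their \emph{complements share no common coordinate}; those $k$ hypergraph vertices form the hyperedge. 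For completeness, given a perfect labeling $\sigma$, take the anti-dictator independent set $I=\{(v,x): x_{\sigma(v)}=0\}$, of $\mu_p$-measure $1-p=(k-2)/(k-1)$; no hyperedge lies inside $I$, since that would force coordinate $\sigma$ of the pulled-back common vertex to belong to all $k$ complements, contradicting the hyperedge condition, so $|OPT|\le (1/(k-1)+o(1))|V|$.

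For soundness, suppose $H$ has an independent set of measure $>\delta$. By averaging, a non-negligible fraction of blocks carry a restriction $f_v:\{0,1\}^{R_v}\to\{0,1\}$ that indicates an independent set of non-negligible $\mu_p$-mass containing no ``bad'' $k$-configuration. The heart of the argument is a combinatorial lemma: such an $f_v$ must be correlated with a \emph{junta}, i.e.\ it has a small set of coordinates carrying most of its influence. Feeding these coordinates into a standard influence-decoding argument, and using smoothness to transport influential coordinates across the projection constraints, produces a labeling satisfying a non-negligible fraction of Label Cover constraints on a non-negligible fraction of layer-pairs, contradicting weak density; taking $\delta$ and the Label Cover soundness small enough closes the gap, so in the NO case every vertex cover has size $\ge (1-\delta)|V|$.

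The main obstacle is precisely that junta lemma in the soundness analysis: showing that a ``large'' $p$-biased family of subsets of $[R]$ with no $k$ members whose complements have empty common intersection must essentially be a junta. This is where $k-1$ (rather than the earlier $k-3$) is extracted, and it requires a careful study of set systems avoiding these covering configurations under the biased measure --- for instance via $p$-biased Fourier analysis combined with a Friedgut/Bourgain-type junta theorem, or a direct Frankl--R\"odl-style argument. Getting the quantitative parameters to line up (a constant number of layers, sub-constant Label Cover soundness, and the matching bias $p=1/(k-1)$) is the delicate part; the rest is routine PCP bookkeeping. The UG-hard variant (Theorem~\ref{ugcmain}) follows the same template with Unique Games replacing multilayered Label Cover and the bias pushed to $p\approx 1/k$, which is what buys the extra factor.
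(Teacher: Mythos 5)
The paper itself does not prove this statement: Theorem~\ref{npmain} is imported verbatim from Dinur, Guruswami, Khot and Regev~\cite{DGKR05}, so there is no in-paper proof to compare against. Measured as a self-contained argument, your proposal is an accurate roadmap of the \emph{published} proof --- the multilayered PCP with the weak-density and smoothness properties is exactly the machinery DGKR introduced for this theorem, the $p$-biased long code with $p=1/(k-1)$ gives the right completeness value $(1/(k-1)+\delta)|V|$, and your hyperedge condition (complements of the $k$ pulled-back sets having empty common intersection) is the correct one for making every anti-dictator an independent set. The architecture is sound and you have identified where the factor $k-1$ comes from.

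But there is a genuine gap, and you name it yourself: the soundness step rests entirely on the claim that a family of subsets of $[R]$ of non-negligible $\mu_p$-measure containing no $k$ members whose complements have empty common intersection must be correlated with a junta. Everything difficult in the $(k-1-\eps)$ bound is concentrated in that one lemma --- it is not a routine application of Friedgut or Bourgain, and DGKR had to develop a dedicated analysis of $s$-wise intersecting families under the biased measure (together with the smoothness of the outer verifier to make the decoded coordinates survive the projections) to get it. Asserting that such a lemma ``should follow from $p$-biased Fourier analysis combined with a Friedgut/Bourgain-type junta theorem, or a Frankl--R\"odl-style argument'' does not discharge it; without a proof of that lemma the soundness direction, and hence the theorem, is unproved. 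A secondary, fixable issue is that you state the gadget in terms of $\mu_p$-weighted blocks and wave at a ``blow-up to get an unweighted instance''; the theorem as stated is about unweighted hypergraphs, so the discretization and the resulting $o(1)$ losses in both thresholds need to be carried through explicitly. As a reconstruction of the strategy of~\cite{DGKR05} your write-up is essentially correct; as a proof it is a skeleton missing its load-bearing bone.
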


The next theorem shows that the approximation algorithm from theorem~\ref{thm:apxdense} is optimal assuming the UGC.

\begin{theorem}
Assuming the UGC and $P\neq NP$, respectively, there is no polynomial time algorithm with an approximation ratio better than
$$
 \frac{k}{k+(k-1)(1-\eps)^{\frac 1{k-\ell}}}\quad \textrm{ and for any }k\geq 3\quad  
\frac{k-1}{k-1+(k-2)(1-\eps)^{\frac 1{k-\ell}}} \textrm{  ,   } 
$$ 
respectively, by a constant for the Vertex Cover problem in $\ell$-wise $\eps$-dense $k$-uniform hypergraphs. 
\end{theorem}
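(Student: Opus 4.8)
The plan is to reduce the general $k$-uniform (resp. $(k-1)$-uniform) Vertex Cover problem, with its UGC-based (resp. NP-based) hardness gap from Theorem~\ref{ugcmain} (resp. Theorem~\ref{npmain}), to the Vertex Cover problem in $\ell$-wise $\eps$-dense $k$-uniform hypergraphs, by a simple padding/completion construction. First I would take a hard instance $H = (V,E)$ on $N$ vertices from Theorem~\ref{ugcmain}: it is UG-hard to distinguish $|OPT| \geq N(1-\delta)$ from $|OPT| \leq N(1/k + \delta)$. I would then build a new $k$-uniform hypergraph $H'$ on $V \cup A$ where $A$ is a set of $a$ fresh ``padding'' vertices, and add to $E$ enough hyperedges — all touching $A$ — so that $H'$ becomes $\ell$-wise $\eps$-dense, while the cheapest way to cover the new hyperedges is simply to take all of $A$ into the cover. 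Concretely, one natural choice is to put into $H'$ every $k$-subset that meets $A$ (or a large structured family thereof), so that every $\ell$-subset $S$ of $V\cup A$ has degree $\geq \eps\binom{|V\cup A|-\ell}{k-\ell}$; the cardinality $a$ is chosen as the smallest value making the density bound hold, which will be $a = \Theta\bigl((1-(1-\eps)^{1/(k-\ell)})\,(N+a)\bigr)$ up to lower-order terms, i.e. $a/(N+a) \to 1-(1-\eps)^{1/(k-\ell)}$.

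The key observation is that an optimal vertex cover of $H'$ consists of $A$ together with an optimal vertex cover of $H$: the hyperedges entirely inside $V$ must be covered within $V$, and the hyperedges meeting $A$ are all covered once $A\subseteq C$, and covering them any other way is at least as expensive by the completeness of the added family (any hyperedge meeting $A$ can only be cheaply killed by taking its $A$-vertex). Hence $|OPT(H')| = a + |OPT(H)|$ and $|V(H')| = N + a$. Plugging in the two cases of the Khot–Regev gap: in the ``yes'' case $|OPT(H')| \geq a + N(1-\delta)$, and in the ``no'' case $|OPT(H')| \leq a + N(1/k+\delta)$. Writing $\gamma := a/(N+a) \to 1-(1-\eps)^{1/(k-\ell)}$, the ratio between these two quantities tends to
$$
\frac{\gamma + (1-\gamma)}{\gamma + (1-\gamma)/k} = \frac{1}{(1-(1-\eps)^{1/(k-\ell)}) + (1-\eps)^{1/(k-\ell)}/k} = \frac{k}{k + (k-1)(1-\eps)^{1/(k-\ell)}},
$$
after multiplying numerator and denominator by $k$ and simplifying $k\gamma + (1-\gamma) = k - (k-1)(1-\eps)^{1/(k-\ell)}$ — wait, one must be careful: the inapproximability factor is the ratio of the large-OPT case to the small-OPT case, $\frac{a+N(1-\delta)}{a+N/k+N\delta}$, and as $\delta\to 0$ and with $\gamma$ as above this is exactly $\frac{k}{k+(k-1)(1-\eps)^{1/(k-\ell)}}$ in the limit. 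Thus no polynomial-time algorithm beats this ratio by a constant under the UGC. For the $P\neq NP$ statement, I would run the identical construction on a hard instance of $(k-1)$-uniform Vertex Cover from Theorem~\ref{npmain} — lifting it to a $k$-uniform hypergraph by adding one fixed new vertex to every old hyperedge, or more simply by noting the padding argument works verbatim with $k$ replaced by $k-1$ in the gap — yielding the factor $\frac{k-1}{k-1+(k-2)(1-\eps)^{1/(k-\ell)}}$ for $k\geq 3$.

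The main obstacle is getting the density bookkeeping exactly right: I must choose the padding family of hyperedges so that simultaneously (i) every $\ell$-subset $S \subseteq V\cup A$, including subsets lying entirely in the original $V$ and subsets straddling $V$ and $A$, attains degree $\geq \eps\binom{N+a-\ell}{k-\ell}$, and (ii) none of the added hyperedges creates a cheaper cover than $A \cup OPT(H)$. The cleanest route is to add \emph{all} $k$-subsets of $V\cup A$ that contain at least one vertex of $A$: then any $\ell$-subset $S$ has at least $\binom{N+a-\ell}{k-\ell} - \binom{N-\ell}{k-\ell}$ hyperedges through it (those $k$-supersets using a padding vertex), and choosing $a$ so that $\binom{N-\ell}{k-\ell} \leq (1-\eps)\binom{N+a-\ell}{k-\ell}$, which holds precisely when $a/(N+a)\gtrsim 1-(1-\eps)^{1/(k-\ell)}$, delivers (i); and (ii) is immediate since the added hyperedges are exactly all those meeting $A$. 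I would also need to verify that the $o(1)$/$\delta$ error terms and the rounding of $a$ to an integer do not erode the constant-factor separation, which is routine. A secondary point to check is that the reduction is polynomial-time despite adding $\binom{N+a}{k}$ hyperedges — fine since $k=O(1)$ — or else one uses a sparser but still sufficiently dense structured family to keep the instance size polynomial.
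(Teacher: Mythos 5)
Your construction is essentially the paper's own proof: the paper likewise pads the Khot--Regev (resp.\ Dinur et al.) hard instance with a complete $k$-uniform ``clique'' occupying a $c$-fraction of the new vertex set together with all hyperedges meeting it, chooses $c = 1-(1-\eps)^{\frac 1{k-\ell}}+o(1)$ to force $\ell$-wise $\eps$-density, observes that the optimum becomes the clique plus an optimum of the original instance, and computes the same gap ratio. One remark: your own arithmetic (like the paper's intermediate step $\frac{k}{1+(k-1)c}$) actually yields the denominator $k-(k-1)(1-\eps)^{\frac 1{k-\ell}}$, which is what matches the upper bound of Theorem~\ref{thm:apxdense}, so the ``$+$'' in the stated factor is a sign typo that you have inherited from the statement rather than introduced.
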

\begin{proof}
As a starting point of the reduction, we use the hypergraph $H=(V,E)$ from theorem~\ref{ugcmain}
and construct a $\ell$-wise $\eps$-dense $k$-uniform hypergraph $H'=(V',E')$. We introduce 
the abbreviations $N:=|V'|$ and $n:=|V|$.
First, we join a clique $C=(V(C),{V(C) \choose k})$ of size $c\cdot N$ (that is, $c\cdot  N$ vertices with all possible hyperedges of size $k$)
 to $H$. Furthermore, we add all hyperedges $e\in {C\cup V \choose k}$, such that at least one vertex in $e$ 
 is from $C$. Thus, we obtain $N=c N+n$ and $N=n/(1-c)$. We denote by $OPT'$
an optimal vertex cover of $H'$. The UG-hard decision question from theorem~\ref{ugcmain}
transforms into the following:
$$
n(1-\delta)+\frac{c n}{1-c}\leq |OPT'| \textrm{  or  } 
|OPT'|\leq n(\frac{1}{k}+\delta)+\frac{c n}{1-c} 
$$

Assuming the UGC, this implies the hardness of approximating the Vertex Cover problem
in $\ell$ wise $\eps$-dense hypergraphs for every $\delta'>0$ to within:
\begin{eqnarray}
\frac{n(1-\delta)+\frac{c n}{1-c}}{n(1/k+\delta)+\frac{c n}{1-c}}
 & = & \frac{n(1-\delta)(1-c)+nc}{n(1/k+\delta)(1-c)+c n}
   = \frac{1-\delta(1-c)}{1/k-c/k+c+\delta/k(1-c)} \\
 & = & \frac{k}{1+(k-1)c}-\delta' \label{eq:UGCc}
\end{eqnarray}   
By using Theorem~\ref{npmain}, an analog calculation leads to the fact that it is NP-hard to approximate 
the Vertex Cover problem
in $\ell$ wise $\eps$-dense hypergraphs for every $\delta'>0$ to within:
\begin{eqnarray}
&&\frac{k-1}{1+(k-2)c}-\delta' \label{eq:NPc}
\end{eqnarray} 

In order to obtain a $\ell$-wise $\eps$-dense $k$-uniform hypergraph, we have to determine the right 
value for $c$.
Let $S\in {V \choose \ell}$ be a subset of $V$, and $d(S)$ denote its degree in $H'$. 
We can assume that $\min \{d(S')\mid S'\in { V' \choose \ell}\} =\min \{d(S)\mid S\in { V \choose \ell}\}$.
Hence, it suffices to ensure that $ d(S)\geq \epsilon {N-\ell \choose k-\ell}=\epsilon \frac{(N-\ell)^{k-\ell}}{(k-\ell)!}-o\left(\frac{(N-\ell)^{k-\ell}}{(k-\ell)!}\right) $ holds. We obtain
\begin{eqnarray*}
d(S)	& \geq &  { N- \ell \choose k-\ell } - { N-\ell -c N \choose k-\ell } \\
	& \geq & \frac{(N- k+1)^{k-\ell }}{(k-\ell )!}-\frac{(N-\ell - c N)^{k-\ell}}{(k-\ell )!} \\
	& =    &\frac{N^{k-\ell}[(1- (k+1)/N)^{k-\ell}-(1- \ell / N - c)^{k-\ell}]}{(k-\ell )!}
\end{eqnarray*}

By comparing the factors, we get $(1-(k+1)/N)^{k-\ell}-(1-\ell/n'-c)^{k-\ell} = \eps - o(1)$.
 Therefore, we deduce that 
$H'$ is $\ell$-wise $\eps$-dense for 
$$
c = 1-[(1- k/n')^{k-\ell} - \eps ]^{\frac 1{k-\ell}} + o(1) = 1 - (1-\eps )^{\frac 1{k-\ell}} + o(1).
$$
Plugging $c$ into equation~(\ref{eq:UGCc}), we obtain the following inapproximability factor $R$
assuming the UGC.
\begin{eqnarray*}
R & = & \frac{k}{1+(k-1)c}-\delta'\\
  & =    & \frac{k}{k+(k-1)(1-\eps)^{\frac 1{k-\ell}}}-o(1)-\delta'
\end{eqnarray*}
 Analogously, by using equation~(\ref{eq:NPc}) instead, we get for every $\delta'>0$ the inapproximability factor  
 $$\frac{k-1}{1+(k-2)(1-\eps)^{\frac 1{k-\ell}}}-o(1)-\delta'$$ assuming $P\neq NP$.

\end{proof}

The next theorem generalizes the former result in the sense that the inapproximability factor is
parameterized by the maximum degree $\Delta$ of the hypergraph.
In addition, it shows that the approximation algorithm from theorem~\ref{thm:mainsampling} is optimal in a specified range of
$\Delta$ assuming the UGC.

\begin{theorem}\label{theo:ave deg}
Assuming the UGC and $P\neq NP$, respectively, for every $c \in \mathbb{N}$, there is a $c_2\in [0,1]$ such that
 no polynomial time algorithm can find a solution with an approximation ratio better than
$$
\frac{k}{1+(k-1)\left. \frac{\bar{d}}{k\Delta}\right.} \textrm{   and for any }k\geq 3 \quad
 \frac{k-1}{1+(k-2)\left. \frac{\bar{d}}{k\Delta}\right.},
$$ 
respectively, by a constant for the Vertex Cover problem $k$-uniform hypergraphs with 
average degree $\bar{d}$, maximum degree 
$\Delta=\Omega(n^{\frac{k-1}{c }})$ and $\Delta\leq c_2\frac{n^{k-1}}{k^{k-1}(k-1)!}$. 
\end{theorem}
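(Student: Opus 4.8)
The plan is to re-run the reduction of the previous theorem, trading the single large clique for a \emph{thin} forcing gadget attached to the hard instance through a carefully balanced family of crossing hyperedges, so that the resulting hypergraph is almost regular and has maximum degree anywhere in the prescribed window. I would start from the UG-hard hypergraph $H_0=(V_0,E_0)$ of Theorem~\ref{ugcmain} (respectively, for $k\ge 3$, the NP-hard hypergraph of Theorem~\ref{npmain}), for which it is hard to distinguish $|OPT|\ge |V_0|(1-\delta)$ from $|OPT|\le |V_0|(1/k+\delta)$ (respectively $|V_0|(1/(k-1)+\delta)$). The only property of $H_0$ I need is that it can be taken with maximum degree polynomially bounded and, after inflating its vertex set, of lower order than the target maximum degree $\Delta$ — which is precisely where the parameter $c$ enters, since the hypothesis $\Delta=\Omega(n^{(k-1)/c})$ is exactly what lets $\Delta$ dominate $\Delta(H_0)$ once $n$ is large.

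Write $\lambda:=\bar d/(k\Delta)$, which replaces the clique fraction of the earlier proof and satisfies $\lambda\in[0,1/k]$ because $\bar d\le\Delta$. I would build $H'$ on $n$ vertices as follows. Put a copy of $H_0$ on a set $V_0$ with $|V_0|=(1-\lambda)n$. On a disjoint set $C$ with $|C|=\lambda n$ put a forcing gadget: a disjoint union of equal-sized cliques whose common size is small enough that the degree induced inside $C$ is at most an $\eps$-fraction of $\Delta$, yet (since $\Delta\to\infty$) large enough that every vertex cover of $H'$ contains at least $(1-o(1))|C|$ vertices of $C$. Add crossing hyperedges, each with exactly one vertex in $C$ and $k-1$ vertices in $V_0$, forming an almost-regular design in which every vertex of $C$ lies in a common number $D$ of them and every vertex of $V_0$ lies in $\frac{(k-1)\lambda}{1-\lambda}\,D$ of them; these two side-degrees are mutually consistent because counting crossing edges from the $C$-side gives $|C|\,D$ and from the $V_0$-side gives $|V_0|\cdot\frac{(k-1)\lambda}{1-\lambda}D/(k-1)$, and these agree exactly when $|C|=\lambda n$ — so the split is forced. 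The scale of $D$, hence of $\Delta$ (which equals $D$ up to the negligible internal degree of $C$), is a free parameter: choosing $D$ appropriately, and taking disjoint copies of $H'$ to inflate $n$ if needed, places $\Delta$ anywhere in $\bigl[\Omega(n^{(k-1)/c}),\,c_2 n^{k-1}/(k^{k-1}(k-1)!)\bigr]$, which is where the admissible $c_2=c_2(c,k)$ is pinned down.

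Two computations finish the argument. Degrees: vertices of $C$ have degree $(1+o(1))\Delta$, and since $\Delta(H_0)=o(\Delta)$ the vertices of $V_0$ have degree $\frac{(k-1)\lambda}{1-\lambda}(1+o(1))\Delta$, which is at most $\Delta$ exactly because $\lambda\le 1/k$; hence $\bar d=\bigl[\lambda+\lambda(k-1)+o(1)\bigr]\Delta=(k\lambda+o(1))\Delta$, i.e.\ $\bar d/(k\Delta)=\lambda+o(1)$, as designed. Covers: the $H_0$-edges lie inside $V_0$, so the restriction to $V_0$ of any vertex cover of $H'$ is a vertex cover of $H_0$, and combined with the forcing property of $C$ this gives $|OPT(H')|=\lambda n+|OPT(H_0)|-o(n)$. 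Feeding the two sides of the $H_0$-gap through this identity and dividing — exactly as in equations~(\ref{eq:UGCc})--(\ref{eq:NPc}) — produces a hardness ratio tending to $\frac{1}{\lambda+(1-\lambda)/k}=\frac{k}{1+(k-1)\lambda}=\frac{k}{1+(k-1)\bar d/(k\Delta)}$ under the UGC, and, starting from Theorem~\ref{npmain} instead, to $\frac{k-1}{1+(k-2)\bar d/(k\Delta)}$ assuming $P\ne NP$; letting $\eps,\delta\to 0$ and $n\to\infty$ absorbs all lower-order terms into the ``$-$ constant'' slack of the statement.

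The step I expect to be the real obstacle is the construction and sparsity bookkeeping around the crossing edges. One has to actually exhibit an almost-regular crossing hypergraph between $C$ and $V_0$ realizing the two prescribed side-degrees (a greedy/design argument, routine in spirit, in which the rigidity $|C|=\lambda n$ is what makes the degree-balance equation solvable), and — more delicately — one must ensure that over the \emph{entire} window $\Delta\in\bigl[n^{(k-1)/c},\,c_2 n^{k-1}/(k^{k-1}(k-1)!)\bigr]$ the base hypergraph $H_0$ can be taken with $\Delta(H_0)=o(\Delta)$ while the gadget cliques can simultaneously be given internal degree $\le\eps\Delta$ and still force a $(1-o(1))$-fraction of $C$. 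This simultaneous compatibility — not the gap computation, which is essentially verbatim — is the content of the quantifier ``for every $c$ there is a $c_2$'', and is where the work lies.
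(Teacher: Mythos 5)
Your reduction is essentially the paper's: pad the gap instance of Theorem~\ref{ugcmain} (resp.\ Theorem~\ref{npmain}) with forced cliques occupying a $\bar d/(k\Delta)$ fraction of the vertices, blow up the instance (this is where the parameter $c$ enters) so that internal degrees are negligible against the crossing hyperedges that realize the target $\bar d$ and $\Delta$, and push the gap through to obtain $k/(1+(k-1)\bar d/(k\Delta))$, resp.\ $(k-1)/(1+(k-2)\bar d/(k\Delta))$. The paper implements the same idea with $(1-\eps/k)n$ disjoint copies of $H$ plus $n\eps/k$ cliques of size $n+k$ on $n^2+o(n^2)$ vertices, and is, if anything, less explicit than you are about the crossing-edge degree bookkeeping that you flag as the main remaining obstacle.
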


\begin{proof}
We proceed as before, by considering the hypergraph $H=(V,E)$ of Theorem~\ref{ugcmain}, with $n:=|V|$. 
Since the only way to achieve a better factor than $k$ is in case of $\frac{\bar{d}}{\Delta}\in (0,1]$, 
we will set $\frac{\bar{d}}{\Delta}:=\epsilon \in (0,1]$ in the remainder.
We construct a new hypergraph $H'=(V',E')$ consisting of $(1-\epsilon/k)n$ disjoint copies of $H$, together with $n\epsilon/k$ disjoint complete $k$-uniform hypergraphs of size $n+k$ (cliques). We let $V_1$ be the set of vertices in the copies of $H$, and $V_2$ the set of vertices of the cliques, and $n':=|V'|=|V_1|+|V_2|=n^{2}+o(n^2)$. Note that the degrees of the vertices restricted to neighbors in $V_1$ or $V_2$ are at most 
${n+k-1\choose k-1}=O(n^{k-1})=O(\sqrt{n'^{k-1}})$. Hence, we can make $H'$ to have asymptotically the average degree $\bar{d}=\omega(n'^{\frac{k-1}{2}})$ and maximum degree $\Delta=\omega(n'^{\frac{k-1}{2}})$ by adding as many hyperedges as needed with vertices in both $V_1$ and $V_2$, thus defining $E'$. By construction, the maximal degree which can be obtained is proportional to ${n^2\epsilon/k \choose k-1}$.

By definition, a vertex cover of $H'$ must contain at least $n+k-k=n$ vertices of each clique. Thus, we need to include at least $n^{2}\epsilon / k $ vertices. We now consider the two cases in the decision problem above. If a vertex cover of $H$ requires $n(1-\delta)$ vertices, then we need $(1-\epsilon/k)n\cdot n(1-\delta ) =
 (1-\epsilon/k)n^2(1-\delta)$ additional vertices to cover all the copies. In the other case, $(1-\epsilon/k)n^2(1/(k-1)+\delta) $ vertices suffice. Up to a $O(n)$ term, those vertices are sufficient to cover $H'$. Denoting by $OPT'$ an optimal vertex cover of $H'$, the UG-hard decision question from theorem~\ref{npmain} therefore becomes:
$$
\left(1-\epsilon / k\right) n^2(1-\delta) + \frac{n^2\epsilon}k + o(n^2) \leq |OPT'| \textrm{  or  } 
\left(1-\epsilon /k\right) n^2(1/(k-1)+\delta) + \frac{n^2\epsilon}k + o(n^2) \geq |OPT'|
$$
Under the UGC, this implies the hardness of approximating within a factor:
\begin{eqnarray}
\frac{\left(1-\epsilon /k\right) n^2(1-\delta) + \frac{n^2\epsilon}k }{\left(1-\epsilon/k\right)n^2(1/(k-1)+\delta) 
+ \frac{n^2\epsilon}k }
& = & \frac{(k-1)\left(1-\epsilon /k\right)(1-\delta) + (k-1)\frac{\epsilon}k }{(k-1)\left(1-\epsilon/k\right)(1/(k-1)+\delta) 
+ (k-1)\frac{\epsilon}k } \\
& = & \frac{k-1}{1-\epsilon/k +(k-1)\epsilon/k}  - \delta '\\
& = & \frac{k-1}{1+(k-2)\epsilon/k}  - \delta '.
\end{eqnarray}
Since we can add $O(n^c)$ copies of $H$ with an arbitrary $c=O(1)$ to construct $H'$, the result follows.
\end{proof}

A similar construction with $\epsilon\to 1$ yields the following.

\begin{theorem}
Assuming the UGC and $P\neq NP$, respectively there is no polynomial time algorithm yielding an approximation ratio better than $\frac{k}{2-1/k}$ and 
for $k\geq 3$ than $\frac{k}{2}$, respectively, by a constant for the Vertex Cover problem in $k$-uniform regular hypergraphs.
\end{theorem}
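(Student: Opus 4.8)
The plan is to push the gadget of Theorem~\ref{theo:ave deg} to the limit $\epsilon\to 1$, where average and maximum degree coincide, and to upgrade the ``near-regular'' construction there to one that is \emph{exactly} $\Delta$-regular. We start from the gap instance $H=(V,E)$ of Theorem~\ref{ugcmain} (for the UGC bound; of Theorem~\ref{npmain} for the $P\neq NP$ bound, which is why that bound is stated only for $k\geq 3$), and set $n:=|V|$. Form $H'=(V',E')$ as the disjoint union of $(1-\tfrac1k)n$ copies $H^{(1)},\dots$ of $H$ together with $\tfrac nk$ complete $k$-uniform hypergraphs (cliques) on $n+k$ vertices each; let $V_1$ be the set of copy-vertices, $V_2$ the set of clique-vertices, so $n':=|V'|=n^2+O(n)$. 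In each clique $C_j$ fix one minimum vertex cover $\kappa_j$, of size $n+1$, and call $K:=\bigcup_j\kappa_j$ the \emph{core}.

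To make $H'$ regular we add a family of ``cross'' hyperedges, each meeting both $V_1$ and $V_2$ and each containing at least one vertex of $K$, chosen so that every vertex ends up with a single common degree $\Delta$. Such a family exists because the degrees present beforehand are only $O(n^{k-1})=O(n'^{(k-1)/2})$ while the core already contains $\Theta(n^2)=\Theta(n')$ vertices, each with spare capacity $\Delta-\binom{n+k-1}{k-1}$; one verifies that with $\Delta$ a suitable constant multiple of $\binom{n+k-1}{k-1}$ (or $\Delta=\Theta(n'^{k-1}/\log n')$ if one additionally wants $H'$ subdense) the ``core side'' and ``non-core side'' degree equations balance exactly, so a balanced edge design with one core anchor per cross edge is unobstructed and can be produced in polynomial time, for instance by round-robin filling of the degree deficits.

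With this construction the analysis mirrors Theorem~\ref{theo:ave deg}. A vertex cover of $H'$ restricted to a clique must cover that clique, and restricted to a copy of $H$ must cover that copy, so $|OPT'|\geq \tfrac nk(n+1)+(1-\tfrac1k)n\,|OPT_H|$; conversely, if $D_i$ is a minimum vertex cover of $H^{(i)}$, the set $K\cup\bigcup_i D_i$ is a vertex cover of $H'$, since it covers every clique edge, every copy edge, and, through its core anchor, every cross edge. Hence $|OPT'|=\tfrac nk(n+1)+(1-\tfrac1k)n\,|OPT_H|+O(n)$. Substituting the two sides of the decision problem and letting $\delta\to 0$ gives, under the UGC (gap $n/k$ versus $n$ from Theorem~\ref{ugcmain}), a ratio tending to $\tfrac{1}{2/k-1/k^2}=\tfrac{k}{2-1/k}$, and under $P\neq NP$ with $k\geq 3$ (gap $n/(k-1)$ versus $n$ from Theorem~\ref{npmain}), a ratio tending to $\tfrac{1}{2/k}=\tfrac k2$, which are the two claimed inapproximability bounds on regular $k$-uniform hypergraphs.

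The main obstacle is the exact-regularity step. In Theorem~\ref{theo:ave deg} it is enough to ``add as many hyperedges as needed'' to reach near-regularity, with no constraint on which vertices they touch; here one must instead exhibit a polynomial-time-constructible cross-edge family that simultaneously (i) equalizes all $n'$ degrees to one value $\Delta$ and (ii) keeps every cross edge incident to the core, so that the $O(n)$ overhead for covering the cross edges really is only $O(n)$. Checking that the degree equations balance so that no such design is obstructed, and then turning this feasibility into an explicit construction, is the technical heart of the argument; the remainder is the same ratio bookkeeping as in the previous theorems.
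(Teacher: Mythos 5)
Your overall plan---instantiating the gadget of Theorem~\ref{theo:ave deg} at $\epsilon=1$, with $(1-\frac1k)n$ copies of $H$ and $\frac nk$ cliques on $n+k$ vertices---is exactly what the paper intends (its entire proof is the sentence ``a similar construction with $\epsilon\to 1$ yields the following''), and your ratio bookkeeping is correct: the forced clique contribution of $\frac{n^2}{k}+O(n)$ against copy contributions of $n^2(1-\frac1k)$ versus $n^2(1-\frac1k)\frac1k$ (resp.\ $\frac1{k-1}$) does give $\frac{k}{2-1/k}$ (resp.\ $\frac k2$). You also correctly isolate the one step the paper glosses over: ``regular'' demands exact degree equalization, not merely $\bar d/\Delta\to 1$.

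Unfortunately the mechanism you propose for that step is obstructed, and in the opposite direction from what you suggest. Every clique vertex starts with degree exactly $B:={n+k-1\choose k-1}$, the \emph{maximum} initial degree in $H'$, while every vertex of $V_1$ starts with degree $d_H(v)\leq{n-1\choose k-1}<B$; moreover $|K|=\frac{n^2}{k}(1+o(1))$ is a $\frac1k$ fraction of $V'$. Hence the total deficit of the core, $\sum_{v\in K}(\Delta-B)$, is strictly \emph{smaller} than $\frac1{k-1}$ times the total deficit of $V'\setminus K$, the shortfall being at least $|V_1|\bigl(B-{n-1\choose k-1}\bigr)$ plus $|V_1|{n-1\choose k-1}-\sum_{v\in V_1}d_H(v)$, which is $\Theta(n^{k+1})$ for the (sparse) hard instances $H$. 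But a cross-edge family in which every edge meets $K$ delivers at least a $\frac1{k-1}$ fraction of the $K$-versus-$(V'\setminus K)$ increments to $K$; allowing more than one core anchor per edge only increases that fraction. So the core saturates at degree $\Delta$ strictly before $V_1$ does, for \emph{every} choice of $\Delta$: the equation you assert ``balances exactly'' in fact never balances, and the same obstruction persists if you anchor at all of $V_2$ rather than at $K$. To repair this you must lower the gadget's internal degree rather than raise $\Delta$: for instance, replace each clique by a $D$-regular $k$-uniform hypergraph on $n+k$ vertices with independence number $o(n)$ (so that it still forces $(1-o(1))(n+k)$ of its vertices into any cover), with $D$ tuned below ${n+k-1\choose k-1}$ so that the two deficits come out in ratio exactly $1:(k-1)$; alternatively, first regularize $H$ itself. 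As written, the exact-regularization step---which you rightly call the technical heart---remains a genuine gap.
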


\end{document}